\newtheorem{theorem}{Theorem}
\newtheorem{lemma}[theorem]{Lemma}
\theoremstyle{definition}
\newtheorem{definition}[theorem]{Definition}
\newtheorem{example}{Example}
\newcommand{\true}{\ensuremath{\top}}
\newcommand{\false}{\ensuremath{\bot}}
\newcommand{\imp}{\ensuremath{\rightarrow}}
\newcommand{\biimp}{\ensuremath{\leftrightarrow}}
\newcommand{\Mudiam}[1]{\ensuremath{\langle #1 \rangle}}
\newcommand{\Mubox}[1]{\ensuremath{[ #1 ]}}
\newcommand{\mcs}{\textup{MCS}\xspace}
\newcommand{\pdl}{\textup{PDL}\xspace}
\newcommand{\pdltest}{\textup{PDL}\ensuremath{^{\scriptstyle ?}}\xspace}
\newcommand{\pdlcap}{\textup{PDL}\ensuremath{^\cap}\xspace}
\newcommand{\pdocap}{\texorpdfstring{\textup{PDL}\ensuremath{^{\cap,{\scriptstyle ?}}_{0}}\xspace}{Iteration-free PDL with Intersection}}
\newcommand{\pdtcap}{\textup{PDL}\ensuremath{^{\cap,{\scriptstyle ?}}}\xspace}
\newcommand{\arena}{\ensuremath{\operatorname{arena}}\xspace}
\newcommand{\dom}{\ensuremath{\operatorname{dom}}\xspace}
\newcommand{\Forw}{\textit{Forw}\xspace}
\newcommand{\Cyc}{\textit{Cyc}\xspace}
\newcommand{\forw}{\ensuremath{\mathsf{forw}}}
\newcommand{\cyc}{\ensuremath{\mathsf{cyc}}}
\newcommand{\lp}{\ensuremath{\mathit{lp}}}
\newcommand{\rp}{\ensuremath{\mathit{rp}}}
\def\newarrow#1{\mathop{{\hbox{\setbox0=\hbox{$\scriptstyle{#1\quad}$}{$%
\mathrel{\mathop{\setbox1=\hbox to
\wd0{\rightarrowfill}\ht1=3pt\dp1=-2pt\box1}\limits^{#1}}%
$}}}}}
\newcommand{\Transition}[3]{\ensuremath{#1 \newarrow{#2} #3}}
\title{A Canonical Model Construction for Iteration-Free PDL with Intersection}
\author{Florian Bruse \and Daniel Kernberger \and Martin Lange 
        \institute{School of Electr.\ Eng. and Computer Science, University of Kassel, Germany}}
\begin{document}
\maketitle

%
%
%

  \begin{abstract}
  We study the axiomatisability of the iteration-free fragment of Propositional Dynamic Logic with Intersection and Tests. The combination
  of program composition, intersection and tests makes its proof-theory rather difficult. We develop a normal form 
  for formulae which minimises the interaction between these operators, as well as a refined canonical model
  construction. From these we derive an axiom system and a proof of its strong completeness.  
  \end{abstract}

\section{Introduction}

Propositional Dynamic Logic (\pdl) is a multi-modal logic with a two-sorted language. It defines \emph{formulae} 
and \emph{programs}. Formulae make assertions about worlds in a Kripke structure referring to the ability to 
access other worlds using programs. The term ``program'' originates from \pdl's early use in program specification 
and verification. It can be seen as a propositional Floyd-Hoare calculus \cite{pratt-1976:semanconsi:inbook:109}. 
\pdl is also related to logics used in knowledge representation, it is basically the same as the description logic 
$\mathcal{ALC}_{\mathsf{reg}}$ \cite{journals/jolli/PrendingerS96}. 

\pdl is a well-behaved modal logic in that its satisfiability problem is decidable. The first upper bound result
yielded NEXPTIME based on a small model property and polynomial time model checking \cite{Fischer79}. Later the
problem was shown to be EXPTIME-complete \cite{pratt-1980,Fischer79}.   
Axiomatisations for \pdl have also been given, for instance in form of the Segerberg axioms \cite{
Parikh78}
and others, c.f.\ \cite{KozPar81}.

In plain \pdl, the programs form a Kleene algebra, built from atomic programs with the operations \emph{union},
\emph{composition} and \emph{iteration}. A natural question concerns its extension with other program constructs
and its effect on expressive power, decidability, complexity, axiomatisability, etc. 
One such operator is \emph{test} which creates new basic programs from formulae (hence making them mutually
recursive). \pdl with Tests (\pdltest) is more expressive than \pdl \cite{Berman-Paterson/81} but satisfiability
has the same complexity. Thus, adding just tests does not create conceptual problems; it only tends to complicate 
correctness proofs slightly. 
Other operators that have been found to be equally harmless in this respect are \emph{looping} \cite{IC::Streett1982} 
and \emph{converse} \cite{Fischer79}. 

Another program construct that one may consider naturally is \emph{intersection}. \pdl with Intersection (\pdlcap)
turns out to be more complex than \pdl. Its satisfiability problem is -- perhaps surprisingly -- decidable, but it 
is 2EXPTIME-complete \cite{lncs208*34,langelutz-jsl05}. The addition of intersection causes the loss of 
bisimulation-invariance and, hence, the tree model property, but it preserves a DAG model property. This does not 
hold true in the presence of tests anymore, though. \pdtcap can require models to have (nested!) cycles. 

The intersection operator also turns out to be intriguing for the problem of axiomatisability. There are studies
concerning sound and complete axiomatisiations for versions of \pdl with Intersection. Balbiani and
Fari\~{n}as del Cerro axiomatised the small fragment without tests, Kleene stars and union \cite{journals/jancl/BalbianiC98}. 
Passy and Tinchev have shown that \pdtcap without the Kleene star (\pdocap) is axiomatisable in the stronger language of added nominals 
\cite{passy91ic}. Nominals can be seen as atomic propositions that hold true in a unique world of a Kripke structure. 
At close inspection one can see that the concept of naming particular worlds -- which is known to break 
bisimulation-invariance as well -- is very helpful on the way to a sound and complete axiomatisation for \pdl with 
Intersection: as we will see in Sect.~\ref{sec:canonical}, the intersection operator can be used to require several 
copies of worlds that should otherwise look exactly the same. Nominals for instance can help to distinguish these copies.

Surely, an axiomatisation of a logic that uses features which are not available in the logic can be questioned. Balbiani 
and Vakarelov have re-considered the problem of axiomatisation for \pdocap and proposed a deductive system
which only use logical \emph{operators} that are available in the logic \cite{DBLP:journals/fuin/BalbianiV01,Ba2003.6}. 
However, the axiomatisation requires a larger vocabulary than the logic to be axiomatised has, see Sect.~\ref{sec:canonical}
for details. In \cite{DBLP:journals/jancl/BalbianiV03}, Balbiani and Vakarelov extended their work to full \pdtcap, but
the issues with the differences between the object logic and the proof logic persist. The work that is closest to the one
presented here is Balbiani's refinement of the \pdocap axiomatisation \cite{Ba2003.6}. That calculus does not 
seem to rely on additional features outside of the object logic. Working out the exact connection between that axiomatisation
and the one presented here is left as future work, see also the concluding remarks at the end of this paper.

Here we propose a normal form for \pdocap formulae which minimises the interaction of intersections and
tests in its programs. We then present a canonical model construction for \pdocap which starts with all maximally consistent
sets and introduces abstract accessibility relations between them that correspond to non-atomic programs. These abstract
edges then get refined, possibly introducing new copies of maximally consistent sets until the model is saturated. 
Then every maximally consistent set is satisfied somewhere in this limit model, so we can derive a complete
axiomatisation for \pdocap over any set of propositions $\mathcal{P}$ which does not need additional operators or propositions.


\section{Iteration-Free PDL with Intersection and Tests}

We fix a set $\mathcal{P} = \{ p, q,\ldots \}$ of unary relation symbols called \emph{atomic propositions} and a set 
$\mathcal{R} = \{a,b,\ldots \}$ of binary relation symbols called \emph{atomic programs} for the rest of the paper. We refer to 
$\mathcal{P}, \mathcal{R}$ as the vocabulary $\tau$.

Formulae $\varphi$ and programs $\alpha$ of Iteration-Free Propositional Dynamic Logic with Intersection and Tests (\pdocap) 
are derived by 
\[
	\varphi \enspace \coloneqq \enspace p \mid \varphi \vee \varphi \mid \neg \varphi \mid \langle \alpha \rangle \varphi \qquad\quad
	\alpha \enspace \coloneqq \enspace a \mid \alpha;\alpha \mid \alpha \cup \alpha \mid \alpha \cap \alpha \mid \varphi?
\]
where $p \in \mathcal{P}, a\in \mathcal{R}$. Other Boolean and modal connectives like $\true,\false,\wedge,\imp$, $\biimp$, $[\alpha]$ can be 
derived as usual. For ease of notation and for reasons of readability, we sometimes use the abbreviation $\alpha^\circlearrowleft \coloneqq \alpha \cap \true?$. We use $\wedge$, $\vee$, $\imp$, $\biimp$ as the descending order of precedence in formulae and $;$, $\cap$, $\cup$
in programs in order to save parentheses. Unary operators always bind stronger than binary ones.

Formulae are interpreted in worlds of a Kripke structure $\mathfrak{A}$; programs are interpreted as binary relations in these structures. A Kripke structure $\mathfrak{A}$ over $\tau$ is a set of worlds $W$ together with interpretations 
$P^{\mathfrak{A}}_p \subseteq A$ for all $p \in \mathcal{P}$ and $R^{\mathfrak{A}}_a \subseteq A \times A$ for all $a \in \mathcal{R}$. A pointed Kripke structure with distinguished world $u$ is written as $\mathfrak{A},u$. The semantics is given inductively as follows.
\begin{displaymath}
\begin{array}{rlclrll}
\mathfrak{A},u &\models p  &\text{ iff } & u \in P^{\mathfrak{A}}_p, p \in \mathcal{P} 
  &  R_{\varphi?}^{\mathfrak{A}} &\coloneqq & \{ (u,u) \mid \mathfrak{A},u \models \varphi \} \\
\mathfrak{A},u &\models \varphi \vee \varphi'  &\text{ iff } & \mathfrak{A},u \models \varphi \text{ or } \mathfrak{A},u\models \varphi'
  & R_{\alpha \cup \beta}^{\mathfrak{A}} &\coloneqq & R_{\alpha}^{\mathfrak{A}} \cup R_{\beta}^{\mathfrak{A}} \\
\mathfrak{A},u &\models \neg \varphi  &\text{ iff } & \mathfrak{A},u \not\models \varphi
  & R_{\alpha \cap \beta}^{\mathfrak{A}} &\coloneqq & R_{\alpha}^{\mathfrak{A}} \cap R_{\beta}^{\mathfrak{A}} \\
\mathfrak{A},u &\models \Mudiam{\alpha}\varphi  &\text{ iff } & \exists v. (u,v) \in R^{\mathfrak{A}}_{\alpha} \text{ and } \mathfrak{A},v \models \varphi \enspace
  & R_{\alpha;\beta}^{\mathfrak{A}} &\coloneqq & R_{\alpha}^{\mathfrak{A}} \circ R_{\beta}^{\mathfrak{A}}.

\end{array}
\end{displaymath}
We say that $\varphi$ and $\psi$ are equivalent and write $\varphi \equiv \psi$ if, for all pointed Kripke structures $\mathfrak{A},u$, we have $\mathfrak{A},u \models \varphi$ if and only if $\mathfrak{A},u \models \psi$.

We write $\mathfrak{A},u \models \Phi$ if $\mathfrak{A},u \models \varphi$ for all $\varphi \in \Phi$.
We write 
$\Transition{u}{\alpha}{v}$ to indicate that $(u,v) \in R^{\mathfrak{A}}_{\alpha}$. We write further $\alpha \Rightarrow \beta$ if for all Kripke structures $\mathfrak{A}$ and all worlds $u,v$ it holds that if $\Transition{u}{\alpha}{v}$, then it also holds that $\Transition{u}{\beta}{v}$.

\begin{definition}
\label{def:wg}
Let $\alpha$ be a program and let $\mathfrak{A}$ be a Kripke structure in which $\Transition{u}{\alpha}{v}$ for some $u,v$. A \emph{witness graph} for $\Transition{u}{\alpha}{v}$ is defined inductively over $\alpha$:
\begin{itemize}
\item If $\alpha = a$ for $a\in\mathcal{R}$ then a witness graph consists of nodes $u$ and $v$ together with the $a$-edge between $u$ and $v$.
\item If $\alpha = \varphi?$ then a witness graph is the node $u (=v)$.
\item If $\alpha = \alpha_1;\alpha_2$ then there is $w$ such that $\Transition{u}{\alpha_1}{w}$ and $\Transition{w}{\alpha_2}{v}$. The witness graph for $\Transition{u}{\alpha}{v}$ is the union of the witness graphs for $\Transition{u}{\alpha_1}{w}$ and $\Transition{w}{\alpha_2}{v}$.
\item If $\alpha = \alpha_1 \cap \alpha_2$ then there are witness graphs for $\Transition{u}{\alpha_1}{v}$ and $\Transition{u}{\alpha_2}{v}$, and their union is the witness graph for $\Transition{u}{\alpha}{v}$.
\item If $\alpha = \alpha_1 \cup \alpha_2$ then there is a witness graphs for $\Transition{u}{\alpha_1}{v}$ or for $\Transition{u}{\alpha_2}{v}$. Either of these is a witness graph for $\Transition{u}{\alpha}{v}$.
\end{itemize}
Witness graphs need not be unique. If \Transition{u}{\alpha}{v}, there can be many witness graphs in a particular Kripke structure. 
\end{definition}


\section{Canonical Models}
\label{sec:canonical}

A \emph{canonical model} (for a modal logic) is typically a Kripke structure whose worlds are all the maximally consistent sets
of the underlying logic whith respect to some notion of provability $\vdash$. Such a set $\Phi$ of formulae is \emph{consistent} if
it is not possible to derive a contradiction from it, i.e.\ if $\Phi \not\vdash \false$. It is a \emph{maximally consistent set} 
(MCS) if it is consistent and maximal with respect to $\subseteq$, i.e.\ it is not possible to add any formula of the underlying
logic without making it inconsistent. Canonical models are typically used to prove (strong) completeness of the axiomatisation $\vdash$
along the following lines. Suppose $\Phi \not\vdash \false$, i.e.\ $\Phi$ is consistent, then it is included in an MCS $\Phi'$. 
Next one shows that for every world $v$ in the canonical model $\mathfrak{K}$ and every formula $\varphi$ of the underlying logic we have 
$\mathfrak{K},v \models \varphi$ iff $\varphi \in v$. So a consistent set, and particularly an MCS, is satisfiable in the canonical model. 
Equally, every valid set of formulae is provable, i.e.\ the axiomatisation is complete. 

\paragraph{Canonical models and \pdocap.}
A simple consequence of the standard understanding of a canonical model is the fact that no two worlds in it represent the same
MCS. It is important to understand that no such standard canonical model construction can be used to prove completeness of an
axiomatisation for \pdocap as the following example shows.

\begin{example}
\label{ex:isectproblem}
Let $\Phi$ be a satisfiable set of \pdocap formulae. Consider the set
\begin{center}
\begin{minipage}{0.7\linewidth}
\[
\mathit{Split}(\Phi) \enspace \coloneqq \enspace \{ \Mudiam{a}\true, \Mudiam{b}\true, \Mubox{a \cap b}\false \} 
  \cup \{ \Mubox{a}\varphi, \Mubox{b}\varphi \mid \varphi \in \Phi \}\ . 
\]
\end{minipage} \hfill
\begin{minipage}{0.27\linewidth}
  \begin{tikzpicture}[every state/.style={minimum size=3mm, inner sep=1pt}]
    \node[state] (0)  {};
    \node[state] (1)  [above right of=0] {$v$};
    \node[state] (2)  [below right of=0] {$v$};
    \node        (1a) [right of=1, node distance=9mm] {\ldots};
    \node        (2a) [right of=2, node distance=9mm] {\ldots};
    
    \path[->] (0) edge node [above left] {$a$} (1)
                  edge node [below left] {$b$} (2)
              (1) edge                         (1a)
              (2) edge                         (2a);
  \end{tikzpicture}
\end{minipage}
\end{center}
It is easily seen to be satisfiable, too. Suppose $\mathfrak{A},v$ is a model of $\Phi$. A model for $\mathit{Split}(\Phi)$ is
obtained using two disjoint copies of $\mathfrak{A},v$ as shown on the right. It is equally possible to see that the two copies
cannot be merged since this would contradict the requirement $\Mubox{a \cap b}\false$. 
\end{example}

\paragraph*{Canonical model constructions for \pdocap in the literature.}
As mentioned in the introduction, the literature contains proposals for \pdocap axiomatisations, most notably by Balbiani and 
Vakarelov \cite{DBLP:journals/fuin/BalbianiV01,Ba2003.6}. The intricacies introduced by program intersection are tackled using
the following principle, c.f.\ \cite[Prop.~2.1]{DBLP:journals/fuin/BalbianiV01}: if $u$ has an $(\alpha \cap \beta)$-successor
then it has an $\alpha$-successor $v$ and a $\beta$-successor $w$ such that $v$ and $w$ cannot be distinguished by any atomic 
proposition $p$. It is important to note that $p$ is not restricted to be drawn from any pre-given set; instead it ranges over 
all propositions that could possibly \emph{extend} an underlying model. 

Balbiani and Vakarelov then formulate this semantic principle syntactically as a proof rule (INT) and present a refined canonical
model construction that circumvents the problem with intersection as outlined in Ex.~\ref{ex:isectproblem} as follows. Worlds
of the canonical model are not MCS but \emph{maximally consistent theories} (MCT). An MCT is an MCS that is closed under 
applications of rule (INT). Hence, every MCT is an MCS, and every MCS with this additional closure property is an MCT. 
Intuitively, closure under rule (INT) helps with the construction of a canonical model for formulae like the ones in
Ex.~\ref{ex:isectproblem} by introducing a new atomic proposition which can be used to distinguished two copies of worlds that
would otherwise be equal as MCSs, but are not equal as MCTs.

Balbiani and Vakarelov then claim that every consistent formula is satisfiable in \emph{the} canonical \pdocap model (based
on MCTs), c.f.\ \cite[Prop.~6.3]{DBLP:journals/fuin/BalbianiV01}. This is not true when taken literally, instead, their constructions
prove the following weaker statement: every consistent \pdocap formula is satisfiable in \emph{some} canonical model. This is a
simple consequence of the fact that applications of rule (INT) introduce new atomic propositions that were not present in the
language in the first place. In other words: the canonical model $\mathcal{K}$ whose worlds are MCTs depends on the 
underlying language.

\begin{example}
\label{ex:isectproblemfull}
Consider \pdocap over the empty set of atomic propositions and let $\Phi_0$ be the theory of the world with no successors. Clearly,
$\Phi_0$ is satisfiable and can therefore not be inconsistent with respect to a sound axiomatisation. As argued above, 
$\mathit{Split}(\Phi_0)$ is also satisfiable but its models must contain two disjoint copies of models of $\Phi_0$. Given that 
$\Phi_0$ is maximal, i.e.\ an MCS, and that rule (INT) is not applicable when no propositions are available and therefore every MCS is
already an MCT we get that $\mathit{Split}(\Phi_0)$ is \emph{not} satisfiable in \emph{the} canonical for \pdocap over the empty
set of atomic propositions. This shows that Prop.~6.3 of \cite{DBLP:journals/fuin/BalbianiV01} needs to be taken with care, namely
in the weaker sense stated above. Note that $\mathit{Split}(\Phi_0)$ is logically equivalent to a finite set of \pdocap formulae, and, 
hence, to a single formula.
\end{example}
Example~\ref{ex:isectproblemfull} can be extended to any given set of propositions. Let $\Phi$ be any propositional labelling of the world with no
successors that is complete in the sense that for every proposition $p$ of the underlying $\mathcal{P}$ we have $p \in \Phi$ iff $\neg p \not\in \Phi$.
Then $\mathit{Split}(\Phi)$ requires two copies of this world to exist. Any invocation of the rule (INT) would require an \emph{additional}
proposition to distinguish the two. Consequently, the calculus of \cite{DBLP:journals/fuin/BalbianiV01} considers the logic \pdocap
in the language of some vocabulary $\tau$ but makes use of formulae that belong to the language of \pdocap in the vocabulary of a genuine 
superset $\tau^*$ of $\tau$. It does not help to consider the larger vocabulary $\tau^*$ in the first place as the characterisation of
the intersection operator uses propositions for every subset of a model, c.f.\ \cite[Prop.~2.1]{DBLP:journals/fuin/BalbianiV01}. This
is clearly problematic in a canonical model construction when the propositions that are used to form the worlds are derived from the
set of all subsets of worlds.  

Balbiani has refined the construction of \cite{DBLP:journals/fuin/BalbianiV01} in order to get rid of the need to introduce new propositions
\cite{Ba2003.6}. Weak completeness is proved using a similar canonical model construction, strong completeness is not achieved. 

Finally, Balbiani and Vakarelov have also extended their work on \pdocap to the full \pdtcap \cite{DBLP:journals/jancl/BalbianiV03} also
using a very similar canonical model construction based on a rule which requires new atomic propositions, and therefore statements about
these models need to be taken with similar care. Most importantly, there is no unique canonical model for all of \pdtcap because its structure 
depends on the vocabulary of the underlying logic but the correctness proofs require its MCTs to be built using propositions for every 
subset of the model. At last, possible problems with canonical models for non-compact logics are avoided using an infinitary rule to handle
Kleene stars in programs.


The following sections are devoted to the presentation of a sound and complete axiomatisation for \pdocap over an arbitrary 
vocabulary that works in the very same vocabulary.


\section{Axiomatising \pdocap}
In this section we propose an axiomatisation for \pdocap and derive a normal form such that every \pdocap formula
is equivalent to one in normal form, and this equivalence is also provable in the calculus. 
 
\subsection{Axioms and Rules} 

\begin{figure}[t]
    \begin{center}
      \begin{minipage}{6cm}
	\begin{flalign}
		&\Mubox{\alpha}p \biimp \neg \Mudiam{\alpha}\neg p \tag{\texttt{Dl}}\label{ax:duality} \\
		&\Mudiam{p?}q \biimp p \wedge q \tag{\texttt{?}}\label{ax:?}\\
		&\Mudiam{\alpha \cap p?}q \biimp \Mudiam{\alpha^\circlearrowleft}(p\wedge q) \tag{\texttt{T1}}\label{ax:t1}\\
		&\Mubox{\alpha;\beta}p \biimp \Mubox{\alpha}\Mubox{\beta}p \tag{\texttt{;}}\label{ax:semicolon} \\
		&\Mudiam{\alpha \cup \beta}p \biimp \Mudiam{\alpha}p \vee \Mudiam{\beta}p \tag{\texttt{D}}\label{ax:cup}\\[2mm]
			&\alpha \cap \beta \Rightarrow \alpha \tag{\texttt{Wk}}\label{ax:weak}\\
			&\alpha \cap \beta \Leftrightarrow \beta \cap \alpha \tag{\texttt{Cm}}\label{ax:comm}\\
			&\alpha \cap \alpha \Leftrightarrow \alpha \tag{\texttt{Ct}}\label{ax:contraction} \\
	        &(\alpha \cup \beta) ; \gamma \Leftrightarrow (\alpha ; \gamma)  \cup (\beta ; \gamma) \tag{\texttt{D3}}\label{ax:distr3}\\
			&\alpha ; (\beta \cup \gamma) \Leftrightarrow (\alpha ; \beta)  \cup (\alpha ; \gamma) \tag{\texttt{D4}}\label{ax:distr4}\\
		&\alpha \cap p? \Leftrightarrow (\Mudiam{\alpha^\circlearrowleft}p)? \tag{\texttt{T}}\label{ax:cyctotest}
	\end{flalign}
	\end{minipage}\hfill
	\begin{minipage}{8cm}
	\begin{flalign}
		&\Mubox{\alpha}(p \imp q) \imp \Mubox{\alpha}p \imp \Mubox{\alpha}q \tag{\texttt{K}}\label{ax:k} \\
		&\Mudiam{\alpha^\circlearrowleft}p \wedge \Mudiam{\beta^\circlearrowleft}q \imp \Mudiam{(\alpha;\beta)^\circlearrowleft}(p\wedge q) 
		     \tag{\texttt{C1}}\label{ax:c1}\\
		&\Mubox{\alpha^\circlearrowleft}p \wedge \Mubox{\beta^\circlearrowleft}p \imp \Mubox{\alpha^\circlearrowleft;\beta^\circlearrowleft}p
		     \tag{\texttt{C2}}\label{ax:c2} \\
         &\Mudiam{\alpha^\circlearrowleft}p \wedge \Mubox{\alpha^\circlearrowleft}q \imp p \wedge q \tag{\texttt{C3}}\label{ax:c3} \\
        &\Mudiam{\alpha;(p\vee q)?;\beta}r \biimp \Mudiam{\alpha;p?;\beta}r \vee \Mudiam{\alpha;q?;\beta}r \tag{\texttt{V}}\label{ax:V}\\[2mm]
			&\alpha \cap (\beta \cap \gamma) \Leftrightarrow (\alpha \cap \beta) \cap \gamma \tag{\texttt{A}}\label{ax:assoc}\\
			&(\alpha;p?) \cap \beta \Leftrightarrow (\alpha \cap \beta);p? \tag{\texttt{T2}}\label{ax:t2}\\
			&(p?;\alpha)\cap \beta) \Leftrightarrow p?;(\alpha\cap\beta) \tag{\texttt{T3}}\label{ax:t3} \\
				&\alpha \cap (\beta \cup \gamma) \Leftrightarrow (\alpha \cap \beta) \cup (\alpha \cap \gamma) \tag{\texttt{D1}}\label{ax:distr1}\\
				&\alpha \cup (\beta \cap \gamma) \Leftrightarrow (\alpha \cup \beta) \cap (\alpha \cup \gamma) \tag{\texttt{D2}}\label{ax:distr2}\\
	&(\varphi \leftrightarrow \psi) \leftrightarrow (\varphi? \Leftrightarrow \psi?) \tag{\texttt{TP}}\label{ax:testprog}
	\end{flalign}
    \end{minipage} \\
    \begin{minipage}{10.5cm}
      \begin{flalign}
	&\alpha^\circlearrowleft \Rightarrow \alpha \ullcorner (\beta_2;\beta_3;[(\beta_1;\beta_2;\beta_3)^\circlearrowleft]p? /
        \beta_2;[(\beta_3;\beta_1;\beta_2)^\circlearrowleft]p?;\beta_3 \ulrcorner^\circlearrowleft \nonumber \tag{\texttt{C}}\label{ax:apo}
      \end{flalign}
    \end{minipage}
    
  \end{center}


\caption{The formula and program axioms for \pdocap.}
\label{fig:axioms}
\end{figure}

Let $\Delta$ be the smallest proof calculus that contains all propositional tautologies, the formula axiom schemes and
program axioms schemes shown in Fig.~\ref{fig:axioms} and the inference rules 
\begin{mathpar}
(\texttt{MP})\ \inferrule{\varphi \\ \varphi \imp \psi}{\psi} \and 
(\texttt{Gen})\ \inferrule{\varphi}{[\alpha]\varphi} \and 
(\texttt{USub})\ \inferrule{\varphi}{\varphi \ullcorner \psi/p \ulrcorner} \and
(\texttt{PSub})\ \inferrule{\varphi \\ \alpha \Rightarrow \alpha'}{\varphi\ullcorner\Mudiam{\alpha'} / \Mudiam{\alpha}\ulrcorner}
\end{mathpar}
where $\varphi\ullcorner\psi/p\ulrcorner$ is the usual substitution and $\varphi\ullcorner\Mudiam{\alpha'} / \Mudiam{\alpha}\ulrcorner$ is meant to denote that every program $\Mudiam{\alpha}$ which occurs under an even number of negation symbols in the syntax tree of the formula is being replaced by $\Mudiam{ \alpha' }$. We use $\ullcorner \hspace{1ex} \ulrcorner$ instead of the usual brackets for the substitution operator to distinguish them from the box modality.

We write $\vdash \varphi$ if the \pdocap-formula $\varphi$ can be derived from the axioms of $\Delta$ alone by repeated application of the rules of inference. For a set $\Phi$ of \pdocap-formulae we write $\Phi \vdash \varphi$ if there exist $\varphi_1,\dotsc,\varphi_n \in \Phi$ such that $\vdash (\bigwedge_{i=1}^n \varphi_i) \imp \varphi$. 

The purpose of rule~\eqref{ax:apo} is to deal with properties of cyclic structures. 

\begin{example}
Consider the formula 
$\varphi =\Mudiam{(a;\psi?;b)^\circlearrowleft}\true$ which is satisfied at a state $u$ which is the beginning of an $(a;b)$-cycle such that $\psi$ is satisfied at an intermediate state $v$ with $\Transition{u}{a}{v}$ and $\Transition{v}{b}{u}$. The satisfiability of $\varphi$ depends on $\psi$, not just on whether it is satisfiable itself but also whether it is compatible with being satisfied on a cyclic structure.

Consider $\psi = \Mubox{(b;a)^\circlearrowleft}\false$. Clearly, $\psi$ can not be satisfied on a state $v$ with $\Transition{v}{b}{u}$ and $\Transition{u}{a}{v}$ for some $u$ whence $\varphi$ is not satisfiable at all. Rule~\ref{ax:apo} incorporates this into the calculus: Considering $\varphi$ with $\psi = \Mubox{(b;a)^\circlearrowleft}\false$, a combination of rule~\ref{ax:apo} and modus ponens yields that $\Mudiam{(a;b;\psi'?)^\circlearrowleft}\true$ with $\psi' = \Mubox{(a;b)^\circlearrowleft}\false$ is a logical consequence of $\varphi$. Using axiom~\ref{ax:t2} we also obtain $\Mudiam{(a;b)^\circlearrowleft}\psi'$ with $\psi'$ as before as a logical consequence. Using axiom~\ref{ax:c3} with $q = \true$ we obtain that $\psi'$ is a also a logical conseqence of $\varphi$ which makes $\varphi$ inconsistent after a few derivations, correctly reflecting unsatisfiability of $\varphi$.

The intuition behind rule~\ref{ax:apo} is that it allows tests on cyclic programs to be transferred further along the cycle while correctly adjusting programs in these tests for the fact that they have been transferred \emph{and} accounting for the fact that all this occurs on a cycle. 
\end{example}

\begin{lemma}
\label{lem:vdashsound}
$\vdash$ is sound, i.e., $\Phi \vdash \varphi$ only if $\Phi \models \varphi$.
\end{lemma}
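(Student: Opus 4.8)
The plan is to reduce soundness of $\Phi\vdash\varphi$ to soundness of the derivability relation $\vdash$ on single formulae. Unfolding the definition, $\Phi\vdash\varphi$ means $\vdash(\bigwedge_{i=1}^n\varphi_i)\imp\varphi$ for some $\varphi_1,\dots,\varphi_n\in\Phi$. Hence it suffices to establish the special case that $\vdash\chi$ implies $\models\chi$, i.e.\ that every theorem of $\Delta$ is valid (true at every world of every Kripke structure). Granting this, the formula $(\bigwedge_i\varphi_i)\imp\varphi$ is valid, so whenever $\mathfrak{A},u\models\Phi$ all the $\varphi_i$ and hence their conjunction hold at $u$, forcing $\mathfrak{A},u\models\varphi$; this is exactly $\Phi\models\varphi$.

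For the special case I would induct on the length of a derivation of $\chi$, the base cases being the axioms of $\Delta$. Propositional tautologies are valid. Each formula axiom scheme of Fig.~\ref{fig:axioms} is verified by unwinding the relational semantics: \eqref{ax:duality}, \eqref{ax:semicolon}, \eqref{ax:cup} and \eqref{ax:k} are the familiar modal and \pdl identities, \eqref{ax:?}, \eqref{ax:t1} and \eqref{ax:V} follow directly from $R^{\mathfrak{A}}_{\varphi?}=\{(u,u)\mid\mathfrak{A},u\models\varphi\}$, and the loop axioms \eqref{ax:c1}--\eqref{ax:c3} become transparent once one observes that $\alpha^\circlearrowleft=\alpha\cap\true?$ forces source and target to coincide. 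The program axiom schemes \eqref{ax:weak}--\eqref{ax:cyctotest} and \eqref{ax:assoc}--\eqref{ax:testprog} assert inclusions $R^{\mathfrak{A}}_\alpha\subseteq R^{\mathfrak{A}}_\beta$ or equalities holding in every $\mathfrak{A}$; apart from \eqref{ax:apo} these are routine properties of union, composition and intersection of relations and of the diagonal relations denoted by tests. They enter the calculus only as side conditions $\alpha\Rightarrow\alpha'$ of rule (\texttt{PSub}), so it is their semantic correctness that must be checked here.

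Next I would show that each inference rule preserves validity. (\texttt{MP}) and (\texttt{Gen}) are immediate, and (\texttt{USub}) holds because reinterpreting the proposition $p$ as the extension of $\psi$ in a given structure exhibits the instance as an ordinary substitution instance of a valid formula. The subtle rule is (\texttt{PSub}). Here I would prove a monotonicity lemma: whenever $\alpha\Rightarrow\alpha'$, so that $R^{\mathfrak{A}}_\alpha\subseteq R^{\mathfrak{A}}_{\alpha'}$ in every structure, the modality $\Mudiam{\alpha}$ is monotone in its relation, and replacing $\Mudiam{\alpha}$ by $\Mudiam{\alpha'}$ at occurrences of positive polarity can only increase the truth value of a formula; since the rule restricts to occurrences under an even number of negations, validity of $\varphi$ passes to $\varphi\ullcorner\Mudiam{\alpha'}/\Mudiam{\alpha}\ulrcorner$. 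Care is needed in the polarity bookkeeping because a replaced diamond may occur inside a test nested in another program: reading boxes as the derived $\neg\Mudiam{\cdot}\neg$, the antitonicity of $\Mubox{\cdot}$ in its program argument is then correctly accounted for by the negation count.

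I expect the one genuinely hard step to be the semantic correctness of the cyclic program axiom \eqref{ax:apo}, since it is precisely the ingredient that lets (\texttt{PSub}) cope with nested cycles and it has no counterpart in plain \pdl. Witness graphs (Def.~\ref{def:wg}) are the natural tool. Given any structure and any loop with $(u,u)\in R^{\mathfrak{A}}_{\alpha^\circlearrowleft}$, its witness graph exhibits a concrete cyclic traversal realising $\alpha$; I would then show that the transported test on the right-hand side — a box over a rotated cycle such as $(\beta_1;\beta_2;\beta_3)^\circlearrowleft$ — holds at the intermediate world reached inside this witness, for the simple reason that that world lies on the very same cycle, merely traversed from a shifted base point. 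The substance of the argument is to formalise this rotation of the base point of a cycle, to verify that it yields a genuine witness for the right-hand program, and to check that the outer $^\circlearrowleft$ then pins the endpoints together, establishing the required inclusion $R^{\mathfrak{A}}_{\alpha^\circlearrowleft}\subseteq R^{\mathfrak{A}}_{\cdots^\circlearrowleft}$. The worked instance with $\psi=\Mubox{(b;a)^\circlearrowleft}\false$ discussed above is a useful sanity check for this construction.
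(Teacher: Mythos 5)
Your overall plan coincides with the paper's: the paper's entire proof of this lemma is the single sentence that it proceeds ``by standard induction on the length of a proof,'' and your reduction of $\Phi\vdash\varphi$ to validity of the theorem $(\bigwedge_i\varphi_i)\imp\varphi$, followed by a case analysis over the axioms and the four rules (including the polarity/monotonicity argument for (\texttt{PSub})), is precisely the standard argument being alluded to. Up to and including (\texttt{PSub}) your sketch is correct and complete enough.

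The step that would not go through as written is the one you yourself identify as the crux, the soundness of \eqref{ax:apo}. You assert that the transported test holds at the shifted world ``for the simple reason that that world lies on the very same cycle, merely traversed from a shifted base point.'' This does not suffice, for two reasons. First, the transported test is a \emph{box} over a rotated loop, say $\Mubox{(\beta_1;\beta_2;\beta_3)^\circlearrowleft}p$ evaluated at the world $y$ reached after $\beta_3$: it quantifies over \emph{all} $(\beta_1;\beta_2;\beta_3)$-cycles through $y$, not merely the one exhibited by the witness graph of the outer loop, and an arbitrary such cycle rotates to a $(\beta_3;\beta_1;\beta_2)$-cycle based at \emph{its own} intermediate world $w'$, which need not be the world $w$ at which the hypothesis $\Mubox{(\beta_3;\beta_1;\beta_2)^\circlearrowleft}p$ was assumed. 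Second, even when $w'=w$, the hypothesis yields $p$ at $w$, whereas the transported test demands $p$ at $y$; rotating the base point of a cycle changes the world at which the tested formula must hold, and nothing in your sketch bridges that gap. A soundness argument for \eqref{ax:apo} must confront both points --- note that the instance actually used in the completeness argument (Lemma~\ref{lem:loop-set-cons}) takes $p=\false$, which neutralises the second issue but not the first. Until this is done, the induction base for this axiom scheme, and hence the soundness proof, remains open.
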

The proof is by standard induction on the length of a proof. The rest of the paper is devoted to showing completeness of $\vdash$ using
the notion of an \mcs, c.f.\ Sect.~\ref{sec:canonical}. 
By Zorn's Lemma, every consistent formula set is contained in an \mcs. Moreover, if $\Phi$ is an \mcs, then it has the following properties:
(1) $\Phi$ is closed under $\vdash$; (2) $\varphi \wedge \psi \in \Phi$ iff $\varphi \in \Phi$ and $\psi\in \Phi$; 
(3) $\varphi \vee \psi \in \Phi$ iff $\varphi \in \Phi$ or $\psi\in \Phi$; and (4) $\varphi \in \Phi$ iff $\neg \varphi \notin \Phi$
for any formula $\varphi$.

\begin{lemma}
\label{lem:intermediate-set-cons}
Let $\Phi,\Psi$ be \mcs, $X$ be a consistent set that is closed under $\vdash$ of \pdocap-formulae and $\alpha_1$ and $\alpha_2$ be \pdocap-programs, $\chi \in X$ such that 
$\Mudiam{\alpha_1;\chi?;\alpha_2}\psi \in \Phi$ for all $\psi \in \Psi$. Then the set 
$X' \coloneqq X \cup \{\varphi \mid \Mubox{\alpha_1}\varphi\in\Phi\} \cup \{\Mudiam{\alpha_2}\psi \mid \psi \in \Psi\}$ is consistent.
\end{lemma}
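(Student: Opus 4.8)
The plan is to argue by contradiction, reducing the consistency of $X'$ to a single derivable implication that can then be pushed back into $\Phi$. Suppose $X'$ were inconsistent. Since $\vdash$ is finitary, there is a finite $S \subseteq X'$ with $\vdash \bigwedge S \imp \false$, and I would split $S$ according to the three blocks of $X'$. Conjoining inside each block, I obtain (i) a formula $\hat\xi \in X$ (using that $X$ is closed under $\vdash$), (ii) finitely many $\varphi_j$ with $\Mubox{\alpha_1}\varphi_j \in \Phi$, and (iii) finitely many $\psi_l \in \Psi$. For (ii), setting $\varphi = \bigwedge_j \varphi_j$ and using that boxes distribute over finite conjunctions (an easy consequence of \eqref{ax:k} and (\texttt{Gen})) together with $\Phi$ being an \mcs{} gives $\Mubox{\alpha_1}\varphi \in \Phi$. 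For (iii), setting $\psi = \bigwedge_l \psi_l$, the fact that $\Psi$ is an \mcs{} yields $\psi \in \Psi$, and monotonicity of $\Mudiam{\alpha_2}$ gives $\vdash \Mudiam{\alpha_2}\psi \imp \Mudiam{\alpha_2}\psi_l$ for each $l$; hence the whole target block may be replaced by the \emph{single} stronger formula $\Mudiam{\alpha_2}\psi$. This last step is the crucial one: it is exactly how I avoid ever having to distribute a diamond over a conjunction --- which fails in \pdocap{} --- by instead collapsing the finitely many obligations into one member of $X'$ coming from the \mcs{} $\Psi$.

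After this reduction I have $\vdash (\hat\xi \wedge \varphi \wedge \Mudiam{\alpha_2}\psi) \imp \false$. Rewriting propositionally as $\vdash \varphi \imp \neg(\hat\xi \wedge \Mudiam{\alpha_2}\psi)$, I would apply (\texttt{Gen}) for $\alpha_1$, push the implication through with \eqref{ax:k} and (\texttt{MP}), and use $\Mubox{\alpha_1}\varphi \in \Phi$ together with closure of $\Phi$ to conclude $\Mubox{\alpha_1}\neg(\hat\xi \wedge \Mudiam{\alpha_2}\psi) \in \Phi$. Duality \eqref{ax:duality} then turns this into $\neg\Mudiam{\alpha_1}(\hat\xi \wedge \Mudiam{\alpha_2}\psi) \in \Phi$, so that $\Mudiam{\alpha_1}(\hat\xi \wedge \Mudiam{\alpha_2}\psi) \notin \Phi$.

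On the other hand, the hypothesis provides $\Mudiam{\alpha_1;\hat\xi?;\alpha_2}\psi \in \Phi$, and the dualised form of \eqref{ax:semicolon} together with the test axiom \eqref{ax:?} (instantiated by (\texttt{USub})) makes this formula provably equivalent to $\Mudiam{\alpha_1}(\hat\xi \wedge \Mudiam{\alpha_2}\psi)$; closure of $\Phi$ then puts the latter \emph{into} $\Phi$, contradicting the previous paragraph. This contradiction shows $X'$ is consistent. The one delicate point I expect to be the main obstacle is that the diamond hypothesis must be applied to the conjunction $\hat\xi$ of all $X$-formulas entering the refutation, not merely to the distinguished $\chi$; this is legitimate precisely because the condition $\Mudiam{\alpha_1;\chi?;\alpha_2}\psi \in \Phi$ is available for every member of $X$ (equivalently, $\chi$ generates $X$ under $\vdash$), and it is here that the closure of $X$ is essential. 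Everything else is routine normal modal reasoning, so the interesting content is concentrated in the two collapsing steps --- a single $\psi$ from $\Psi$ on the target side, a single $\varphi$ via box-distribution on the source side --- which together keep the non-distributivity of intersection from ever being triggered.
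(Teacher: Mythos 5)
Your proof is correct and follows essentially the same route as the paper's: reduce inconsistency of $X'$ to a single triple $\hat\xi,\varphi,\Mudiam{\alpha_2}\psi$ via closure under conjunction, unfold $\Mudiam{\alpha_1;\hat\xi?;\alpha_2}\psi$ into $\Mudiam{\alpha_1}(\hat\xi \wedge \Mudiam{\alpha_2}\psi)$ with \eqref{ax:semicolon} and \eqref{ax:?}, and play this off against $\Mubox{\alpha_1}\varphi \in \Phi$ (the paper pushes $\varphi$ into the diamond and derives $\Mudiam{\alpha_1}\false$, while you contrapose, but this is the same argument). You also correctly identify the reading of the hypothesis as quantifying over all $\chi \in X$, which the paper's proof uses implicitly.
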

\begin{proof}
Assume $X'$ was not consistent. Then, w.l.o.g., there are $\varphi,\psi,\chi$ such that $\chi \in X$, 
$\Mubox{\alpha_1}\varphi \in \Phi$ and $\psi\in \Psi$, but 
$\vdash \varphi \wedge \chi \wedge \Mudiam{\alpha_2}\psi\imp \false$. We have $\Mudiam{\alpha_1;\chi?;\alpha_2}\psi \in \Phi$ and, 
hence, $\Mudiam{\alpha_1}(\chi \wedge \Mudiam{\alpha_2}\psi) \in\Phi$ as well as $\Mubox{\alpha_1}\varphi \in \Phi$, so 
$\Mudiam{\alpha_1}(\varphi \wedge \chi \wedge \Mudiam{\alpha_2}\psi) \in \Phi$ which contradicts the assumption.
\end{proof}

\begin{lemma}
\label{lem:loop-set-cons}
Let $\Phi,\Psi,\Upsilon$ be \mcs, $X$ be consistent and closed under $\vdash$, and $\alpha_1, \alpha_2,\beta_1,\beta_2$ be programs. Let
\begin{enumerate} 
\item $\Mudiam{\alpha_1;\chi?;\alpha_2}\psi \in \Phi$ for all $\psi \in \Psi$, $\chi \in X$,
\item $\Mudiam{\big((\beta_2;\upsilon?;\beta_1);\varphi?;(\alpha_1;\chi?;\alpha_2)\big)^\circlearrowleft}\true \in \Psi$ 
      for all $\upsilon \in \Upsilon, \varphi\in \Phi,\chi \in X$,
\item $\Mudiam{\big((\alpha_1;\chi?;\alpha_2);\psi?;(\beta_2;\upsilon?;\beta_1)\big)^\circlearrowleft}\true \in \Phi$ 
      for all $\chi \in X, \psi \in \Psi, \upsilon \in \Upsilon$,
\item $\Mudiam{((\beta_1;\varphi?;\alpha_1;);\chi?;(\alpha_2;\psi?;\beta_2))^\circlearrowleft}\true \in \Upsilon$ for all 
      $\varphi \in \Phi,\chi \in X, \psi \in \Psi$.
\end{enumerate}
Then the following set $X^*$ is consistent. 
\begin{align*}
X^* \enspace = \enspace X \ \cup \ & \{\varphi\mid \Mubox{\alpha_1}\varphi\in\Phi\}  \cup \{\Mudiam{\alpha_2}\psi \mid \psi \in \Psi\} \\
  \ \cup \ & \{\Mudiam{\big((\alpha_2;\psi?;\beta_2);\upsilon?;(\beta_1;\varphi?;\alpha_1)\big)^\circlearrowleft}\true \mid \psi \in \Psi,\upsilon\in \Upsilon,\varphi\in \Phi\}
\end{align*}
\end{lemma}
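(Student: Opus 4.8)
The plan is to prove the statement in contrapositive form, exactly in the spirit of the proof of Lemma~\ref{lem:intermediate-set-cons}, but now carrying the cyclic obligation all the way around the cycle $\Phi \to X \to \Psi \to \Upsilon \to \Phi$ that the four hypotheses encode (reading each $^\circlearrowleft$-formula as the same 4-node loop based at a different world). Concretely, I would assume $X^*$ is inconsistent and derive $\false \in \Phi$, contradicting that $\Phi$ is an \mcs.

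First I would pass to a finite inconsistency and consolidate it to single representatives. Since $X$ is closed under $\vdash$ (hence under $\wedge$) and $\Phi,\Psi,\Upsilon$ are \mcs{} (hence closed under $\wedge$), the finitely many used members of the four constituent sets of $X^*$ can be replaced by single formulas: $\chi \in X$; a $\varphi_\Box$ with $\Mubox{\alpha_1}\varphi_\Box \in \Phi$; and a single loop formula $\Mudiam{\gamma^\circlearrowleft}\true$ with $\gamma = (\alpha_2;\psi?;\beta_2);\upsilon?;(\beta_1;\varphi?;\alpha_1)$ where $\psi \in \Psi$, $\upsilon \in \Upsilon$, $\varphi \in \Phi$ are taken strong enough (conjunctions within the respective \mcs). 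Collapsing several loops into one uses rule~(\texttt{PSub}) with the valid program implication $\gamma^\circlearrowleft \Rightarrow \gamma_i^\circlearrowleft$, since strengthening a test shrinks the relation; the box conjuncts combine via \ref{ax:k} and \texttt{Gen}. Because $\gamma$ begins with $\alpha_2;\psi?$, axioms \ref{ax:weak}, \ref{ax:semicolon} and \ref{ax:?} (with \ref{ax:duality}) give $\Mudiam{\gamma^\circlearrowleft}\true \imp \Mudiam{\alpha_2}\psi$, so the whole family $\{\Mudiam{\alpha_2}\psi \mid \psi \in \Psi\}$ is already subsumed by the loop and the inconsistency collapses to $\vdash \chi \wedge \varphi_\Box \wedge \Mudiam{\gamma^\circlearrowleft}\true \imp \false$, equivalently $\vdash \chi \wedge \varphi_\Box \imp \Mubox{\gamma^\circlearrowleft}\false$.

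It remains to refute this box-obligation. Here $\Mubox{\gamma^\circlearrowleft}\false$ is a property of the intermediate world $X$, and $X$ is precisely the $\alpha_1$-successor that $\Phi$ is forced to have by $\Mubox{\alpha_1}\varphi_\Box \in \Phi$ together with the diamond $\Mudiam{\alpha_1;\chi?;\alpha_2}\psi \in \Phi$ of the first hypothesis. Following the idea illustrated in the example after rule~\ref{ax:apo}, I would treat $\Mubox{\gamma^\circlearrowleft}\false$ as a box-test sitting on the cycle $\delta^\circlearrowleft$ that the third hypothesis guarantees at $\Phi$, where $\delta = (\alpha_1;\chi?;\alpha_2);\psi?;(\beta_2;\upsilon?;\beta_1)$, and transport it around the cycle by repeated application of rule~\ref{ax:apo}. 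Each application moves the box-test across one segment while rotating its internal loop accordingly; the loops that license the intermediate rotations at the worlds $\Psi$ and $\Upsilon$ are exactly the ones supplied by hypotheses~2 and~4, the leading/trailing test bookkeeping being handled by \ref{ax:t2} and \ref{ax:t3}. Combining the transported box-test with these loops via \ref{ax:c1} and applying \ref{ax:c3} should yield $\neg\Mudiam{\delta^\circlearrowleft}\true \in \Phi$, contradicting the third hypothesis, which puts $\Mudiam{\delta^\circlearrowleft}\true \in \Phi$.

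The genuinely hard part is this last transport: making precise how rule~\ref{ax:apo} carries $\Mubox{\gamma^\circlearrowleft}\false$ segment by segment around the cycle with the correct rotation of its own internal loop at each step, and verifying that hypotheses~2--4 furnish exactly the loops needed to justify these rotations. Everything else---the consolidation, the absorption of the forward diamond, and the final \ref{ax:k}-style contradiction via $\Mudiam{\alpha_1}\false \equiv \false$---is routine modal bookkeeping analogous to Lemma~\ref{lem:intermediate-set-cons}. I would therefore first isolate and prove once, from \ref{ax:apo} and \ref{ax:c1}--\ref{ax:c3}, a reusable ``rotation'' fact for cyclic programs with tests, and then invoke it along the three edges of the cycle.
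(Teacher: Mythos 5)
Your overall strategy matches the paper's: assume a finite inconsistency, plant the resulting negated loop formula as a test on one of the hypothesised cycles, and use rule~\eqref{ax:apo} to rotate that test to the base point of the cycle, where (via \eqref{ax:t2} and \eqref{ax:c3}) it yields a box--false of essentially the same loop that one of the hypotheses asserts as a diamond--true, a contradiction. Two corrections, though. First, the ``genuinely hard part'' you defer to a reusable rotation lemma is in fact a \emph{single} application of rule~\eqref{ax:apo}: taking $\beta_3$ in the rule to be the entire remaining three-quarter arc of the cycle moves the test from its intermediate position to the base point in one step and simultaneously re-bases its internal loop there. No segment-by-segment transport and no induction around the cycle is needed. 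Second, your idea that hypotheses~2 and~4 ``license the intermediate rotations at $\Psi$ and $\Upsilon$'' misreads rule~\eqref{ax:apo}; it is an unconditional program implication with no side conditions. The four loop hypotheses are instead consumed roughly one per component of $X^*$: each supplies, at the appropriate one of $\Phi$, $\Psi$, $\Upsilon$, the carrier cycle into whose $X$-test-slot the offending formula is inserted before being rotated home. This is also why the paper anchors the argument for the third component at $\Upsilon$ via hypothesis~4 rather than at $\Phi$ via hypothesis~3: inconsistency of $X$ with the loop formulas puts $\bigvee_i \neg\zeta_i$ directly into $X$ (by closure under $\vdash$), so it drops straight into the test slot of hypothesis~4, and axiom~\eqref{ax:V} then extracts a single disjunct $\neg\zeta_i$ as the test to be rotated --- whereas your route through $\Phi$ additionally requires pushing $\Mubox{\alpha_1}\varphi_\Box$ into a test inside a loop, an extra derived principle you would have to establish. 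Your conjunction-consolidation in place of \eqref{ax:V}, and your observation that the loop formulas subsume the $\Mudiam{\alpha_2}\psi$ component, are both sound and the latter is a genuine simplification the paper does not make. With the two corrections above your plan closes and is essentially the paper's proof.
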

\begin{proof}
We show that the union of $X$ and the third set is consistent. 
Assume that it is not. Then there are finitely many formulae $\zeta_1,\dotsc,\zeta_n$ of the form
$\zeta_i = \Mudiam{((\alpha_2;\psi?;\beta_2);\upsilon?;(\beta_1;\varphi?;\alpha_1))^\circlearrowleft}\true$,
with programs and formulae as suggested such that $\zeta = \bigvee_{i = 1}^n \neg \zeta_i \in X$. Then 
$\Mudiam{((\beta_1;\varphi?;\alpha_1;);\zeta?;(\alpha_2;\psi?;\beta_2))^\circlearrowleft}\true \in \Upsilon$
and, since $\Upsilon$ is an \mcs, also rule \eqref{ax:V} implies that also
$\Mudiam{((\beta_1;\varphi?;\alpha_1;);\neg\zeta_i?;(\alpha_2;\psi?;\beta_2))^\circlearrowleft}\true \in \Upsilon$
for at least one $i$. Now we can apply rule \eqref{ax:apo} to conclude that 
$\Mudiam{((\beta_1;\varphi?;\alpha_1;);\true?;(\alpha_2;\psi?;\beta_2);\neg\zeta'_i))^\circlearrowleft}\true \in \Upsilon$
where $\neg \zeta'_i$ is equivalent to 
$\Mubox{((\beta_1;\varphi?;\alpha_1);\true?;(\alpha_2;\psi?;\beta_2);\upsilon?)^\circlearrowleft}\false$
which contradicts consistency of $\Upsilon$ since then also $\neg \zeta'_i \in \Upsilon$.

Similar arguments show that we can also add the other two sets without losing consistency.
\end{proof}


\subsection{A Normal-Form Lemma for \pdocap}

We partition the \pdocap-programs into two groups. The first one consists of \emph{cyclic} programs, \Cyc in short.
They test if something holds at the present state, possibly requiring this state to have a (perhaps complex) self-loop. Thus, 
they force the evaluation of a formula to stay at the current node. Secondly, there are the \Forw-programs which make up all
others. These require the evaluation of a formula to take at least one step into some direction. Syntactically, cyclic and
forward programs are defined as follows:
\begin{displaymath}
\alpha_{\forw} \enspace \coloneqq \enspace a \mid \alpha_{\forw} \cap \alpha_{\forw} \mid \alpha_{\forw} ; \alpha_{\cyc} ; \alpha_{\forw}
 \qquad \qquad
\alpha_{\cyc} \enspace \coloneqq \enspace \varphi? \mid \alpha_{\forw} \cap \varphi?
\end{displaymath}

\begin{lemma}
\label{lem:program-normal-form}
For every $\varphi$ there is a $\varphi'$ such that $\vdash \varphi \biimp \varphi'$ and all programs in $\varphi'$ belong to
$\Cyc \cup \Forw$. Moreover, for all formulae of the form $\Mudiam{\alpha_{\cyc}}\varphi$ we have 
$\vdash \Mudiam{\alpha_{\cyc}}\varphi \biimp \Mudiam{\alpha_{\forw}^\circlearrowleft}\varphi'$ for some formula $\varphi'$ or 
just $\vdash \Mudiam{\alpha_{\cyc}}\varphi \leftrightarrow \varphi'$ for some formula $\varphi'$ without the modal prefix 
$\Mudiam{\alpha_{\cyc}}$.
\end{lemma}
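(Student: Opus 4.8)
The plan is to prove both assertions simultaneously by induction over a well-founded complexity measure on formulae (lexicographically: modal depth first, then total program size). The Boolean cases are immediate, since the normal form is closed under $\neg$, $\vee$ and the derived connectives, which introduce no programs, so one applies the induction hypothesis to the immediate subformulae and reassembles. Everything therefore reduces to the modal case $\Mudiam{\alpha}\psi$, where I may assume by induction that $\psi$, and every formula occurring in a test inside $\alpha$, is already in normal form. The core of the argument is then a case analysis on $\alpha$ that rewrites the diamond using the program axioms as directed rules, each step justified by an axiom and leaving all resulting tests filled with normal-form formulae.

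Before the main analysis I would record a few auxiliary program equivalences needed to assemble the three-part shape $\alpha_{\forw};\alpha_{\cyc};\alpha_{\forw}$ of forward programs: associativity of composition, the identity law for $\true?$, and the merging rule $\varphi?;\psi? \Leftrightarrow (\varphi\wedge\psi)?$. These follow at the level of formulae from \eqref{ax:semicolon} and \eqref{ax:?} and are available through the rule (\texttt{PSub}). With these the easy program cases are routine: $\Mudiam{a}\psi$ is already normal since atomic programs are forward; $\Mudiam{\varphi_0?}\psi \biimp \varphi_0 \wedge \psi$ by \eqref{ax:?}, yielding the ``no modal prefix'' variant; $\Mudiam{\alpha_1 \cup \alpha_2}\psi$ splits via \eqref{ax:cup}; and the dual of \eqref{ax:semicolon} turns $\Mudiam{\alpha_1;\alpha_2}\psi$ into $\Mudiam{\alpha_1}\Mudiam{\alpha_2}\psi$, handled by normalising $\Mudiam{\alpha_2}\psi$ first and then the diamond over the strictly smaller program $\alpha_1$.

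The heart is the intersection case $\Mudiam{\alpha_1 \cap \alpha_2}\psi$, which I would reduce by exhaustively applying the following rewrites. First, unions inside an intersection are pushed outward with \eqref{ax:distr1}, \eqref{ax:distr3} and \eqref{ax:distr4} and split off by \eqref{ax:cup}, so both operands become union-free. Second, leading and trailing tests are moved to the boundary of the intersection with \eqref{ax:t3} and \eqref{ax:t2}; a boundary test then leaves the diamond as a conjunct (leading, via the dual of \eqref{ax:semicolon} and \eqref{ax:?}) or is absorbed into the argument formula (trailing), and the recursion continues on the strictly smaller intersection. Third, an operand that is a bare test is removed by \eqref{ax:t1} at the top of the diamond, and a forward-intersected-with-test block sitting at the start or end of a composition is collapsed to a single test by \eqref{ax:cyctotest}; using \eqref{ax:assoc}, \eqref{ax:comm} and \eqref{ax:contraction} the surviving intersections are flattened and their operands recursively normalised to forward programs, with any residual test forming a cyclic subprogram $\alpha_{\forw} \cap \varphi_0?$. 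At this point every remaining intersection is an intersection of forward programs, hence forward, and every test occurs either as a cyclic middle $\alpha_{\forw};\alpha_{\cyc};\alpha_{\forw}$ of a composition or inside a cyclic program, so the formula is in normal form. The ``moreover'' then comes for free: a top-level cyclic program is either a bare test $\varphi_0?$, giving $\varphi_0 \wedge \varphi$ by \eqref{ax:?}, or of the form $\alpha_{\forw} \cap \varphi_0?$, giving $\Mudiam{\alpha_{\forw}^\circlearrowleft}(\varphi_0 \wedge \varphi)$ by \eqref{ax:t1}.

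The main obstacle I anticipate is termination of this rewriting, i.e.\ exhibiting the measure that each step genuinely decreases. The subtlety is that \eqref{ax:cyctotest} and \eqref{ax:t1} replace an intersection-with-a-test by a freshly built formula $\Mudiam{\alpha^\circlearrowleft}p$ that is not a subterm of the original and whose modal depth need not drop. The resolution I would use is that the program so produced is already cyclic, hence in $\Cyc$, so it triggers no further program normalisation, while the formula it contains is a strict subformula of a test of $\alpha$ and thus lies below the current formula in the measure; the remaining rewrites (union distribution, boundary-test extraction, intersection flattening) each strictly reduce the number of program operators under the diamond. A lexicographic measure on modal depth and program size therefore decreases at every step. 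Care must be taken that the grouping of compositions into the three-part forward shape, which temporarily inserts $\true?$ connectors, is performed only after the cyclic and test positions have been fixed, so that it does not reopen the recursion.
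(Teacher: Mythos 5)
Your proposal is correct and follows essentially the same route as the paper's own proof: eliminate unions from programs via the distribution axioms and (\texttt{D}), move tests to the boundaries of intersections with (\texttt{T2})/(\texttt{T3}), collapse forward-intersected-with-test blocks to tests via (\texttt{T}), dispatch top-level cyclic programs with (\texttt{?}) and (\texttt{T1}) to obtain the ``moreover'' clause, and insert $\true?$ connectors to reach the three-part forward shape. The only differences are organisational --- you cast the argument as an explicit well-founded induction with a termination measure (a point the paper leaves entirely implicit) and you split top-level compositions into nested diamonds rather than keeping them as $\alpha_{\forw};\true?;\alpha_{\forw}$ --- neither of which changes the substance.
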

\begin{proof}
	We will show this in several steps. First, we eliminate the disjunction operator from programs. Using \eqref{ax:comm},
	\eqref{ax:distr1},\ldots,\eqref{ax:distr4}, we can transform every program into one in which $\cup$ does not occur underneath a
	different program operator. Using axiom \eqref{ax:cup} it is possible to eliminate occurrences of $\cup$ in such top-level positions in programs. In the following, $\alpha_{\cyc}$ denotes an arbitrary program from \Cyc.
	
	Next we note that \Cyc-programs commute with the intersection operator if they are at the end or the beginning of a sequential composition: Using axiom \eqref{ax:cyctotest} and some basic propositional logic, one can check that $\vdash \alpha \cap \psi? \enspace \Leftrightarrow \enspace (\Mudiam{\alpha^\circlearrowleft}\true \wedge \psi)?$ and thus every \Cyc-program is equivalent to a test. Consequently with \eqref{ax:t2}, \eqref{ax:t3} we can derive that	
	\[
			\vdash (\alpha_{\cyc} ; \alpha) \cap \alpha' \Leftrightarrow \alpha_{\cyc} ; ( \alpha \cap \alpha' ) \text{ and }
			\vdash (\alpha ; \alpha_{\cyc}) \cap \alpha' \Leftrightarrow ( \alpha \cap \alpha' ) ; \alpha_{\cyc}.
	\]
	Note that after using this equivalence $\alpha_{\cyc}$ is at the beginning or the end of the respective sequence. We can thus assume that every occurence of a \Cyc-program is as high as possible in the syntax-tree of a program.
	
	In the next step we want to eliminate, resp.\ simplify isolated \Cyc-programs, i.e.\ programs $\alpha_{\cyc}$ in a 
	formula $\Mudiam{\alpha_{\cyc}}\varphi$. Using \eqref{ax:t1} we get 
	\[
	\vdash \Mudiam{\psi?}\varphi \Leftrightarrow \varphi \wedge \psi \qquad \text{and} \qquad  
	\vdash \Mudiam{\psi? \cap \alpha} \varphi \Leftrightarrow \Mudiam{\alpha \cap \true?}(\varphi \wedge \psi)\ .
    \]
    Note that these equivalences entail the second and third statement of the lemma.
    	 
	Similarly we can simplify \Cyc-programs at the beginning or end of a program that is just a sequential composition by first using \eqref{ax:semicolon} and then \eqref{ax:t1} again.

	We have just dealt with \Cyc-programs at the beginning or the end of a sequence of sequential compositions and moved them up in the syntax tree wherever possible. It remains to be seen how they are treated in the middle of such a sequence.
	
	Let $\alpha ; \alpha_{\cyc} ; \alpha'$ be a program with a cyclic program in the middle of a sequence. We make a case distinction over $\alpha$. The part for $\alpha'$ is symmetric.	
	In case $\alpha$ is atomic or the intersection of \Forw-programs, we are finished. 
	If $\alpha = \beta ; \beta_{\cyc}$, then $\alpha ; \alpha_{\cyc} ; \alpha' = \beta ; \beta_{\cyc} ; \alpha_{\cyc} ; \alpha '$. Again, we use the fact that each \Cyc-program is equivalent to a simple test. Using \eqref{ax:semicolon} and \eqref{ax:?} several times we get $\vdash \beta ; \beta_{\cyc} ; \alpha_{\cyc} ; \alpha' \Leftrightarrow \beta ; \beta_{\cyc} \cap \alpha_{\cyc} ; \alpha$.
	Note, that \Cyc-programs are closed under intersections. 
	
	The last case is that of $\alpha = \beta ; \beta'$ where both are simple \Forw-programs. Using basic propositional logic,  
	\eqref{ax:semicolon} and \eqref{ax:?} we get $\vdash \beta ; \beta' \Leftrightarrow \beta ; \true? ; \beta'$.

\end{proof}


As a consequence of the proof of this lemma, we will sometimes assume w.l.o.g.\ that \cyc-programs are of the form $\varphi?$ for some suitable $\varphi$.

\section{A Canonical Model for \pdocap}
\subsection{Large Programs}
In order to maintain induction invariants during the construction of our canonical model, we need to allow tests to test against arbitrary sets of formulae as opposed to just one formula. We call the resulting extension of programs \emph{large programs}. This is not exactly the same notion of large programs as in \cite{DBLP:journals/jancl/BalbianiV03}.
\begin{definition}
The set of \emph{large programs} is defined inductively via
\[
\alpha \Coloneqq a \mid \alpha \cap \alpha \mid \alpha;\Phi?,\alpha
\]
where $\Phi$ is a consistent set of \pdocap-formulae. A large loop is of the form $\alpha^\circlearrowleft$, where $\alpha$ is a large program.

An ordinary program $\alpha$ is an \emph{instance} of a large program $\alpha_l$ if 
\begin{itemize}
\item $\alpha = \alpha_l = a$ for some accessibility relation $a$ or,
\item $\alpha = \alpha^1 \cap \alpha^2, \alpha^l = \alpha^1_l \cap \alpha^2_l$ and $\alpha^i$ is an instance of $\alpha^i_l$ for $i = 1,2$ or,
\item $\alpha = \alpha^1;\varphi?;\alpha^2, \alpha_l = \alpha^1_l;\Phi?;\alpha^2_l$, $\varphi \in \Phi$ and $\alpha^i$ is an instance of $\alpha^i_l$ for $i = 1,2$.
\end{itemize}
A loop $\alpha^\circlearrowleft$ is an instance of a large loop $\alpha_l^\circlearrowleft$ if $\alpha$ is an instance of $\alpha_l$.

We write $\alpha \leq \alpha'$ for large programs $\alpha, \alpha'$ if and only if every instance of $\alpha$ is an instance of $\alpha'$.
\end{definition}
A large program is an ordinary program where tests against a formula have been replaced with tests against a set of formulae. Clearly, an ordinary program with consistent tests can be made large by replacing tests of the form $\varphi?$ by tests of the form $\{\varphi\}?$, and the original program will be an instance of the new large program.
\begin{definition}
Let $\Phi, \Psi$ be \mcs and let $\alpha$ be a large program. 
For occurrences of subprograms $\beta$ define the left and right sets $l(\beta)$ and $r(\beta)$ in a top-down manner via
\begin{itemize}
\item $l(\alpha) \coloneqq \Phi, r(\alpha) \coloneqq \Psi$,
\item If $\alpha = \alpha_1 \cap \alpha_2$ then $l(\alpha_1) = l(\alpha_2) \coloneqq l(\alpha)$ and $r(\alpha_1) = r(\alpha_2) \coloneqq r(\alpha)$,
\item If $\alpha = \alpha_1; X?;\alpha_2$ then $l(\alpha_1) \coloneqq l(\alpha), r(\alpha_1) = l(\alpha_2) \coloneqq X, r(\alpha_2) \coloneqq r(\alpha)$.
\end{itemize}
In the case of a large loop $\alpha^\circlearrowleft$ and a set $\Phi$, 
define the left and right programs $\lp(X)$ and $\rp(X)$ of an occurrence of a test $X$ in a top-down manner as follows:
\begin{itemize}
\item $\lp(\Phi) \coloneqq \true?, \rp(\Phi) \coloneqq \true?$,
\item If $\alpha = \alpha_1; X?;\alpha_2$ then $\lp(X) \coloneqq \lp(l(\alpha));l(\alpha)?;\alpha_1$ and $\rp(X) \coloneqq
  \alpha_2;r(\alpha)?,\rp(r(\alpha))$.
\end{itemize}
We say that $\Transition{\Phi}{\alpha}{\Psi}$ is \emph{consistent} if $\Mubox{\beta'}\neg \psi \notin l(\beta)$ for all instances $\beta'$ of subprograms $\beta$ of $\alpha$ and all $\psi \in r(\beta)$, and all test sets are consistent. We say that $\Transition{\Phi}{\alpha}{\Psi}$ is inconsistent if it is not consistent. A large loop $\alpha^\circlearrowleft$ is consistent at $\Phi$ if the above holds with $\Phi = \Psi$ and for each test set $X$, no formula of the form $\Mubox{(\beta_1;\varphi?;\beta_2)^\circlearrowleft}\false$, with $\beta_1$ an instance of $\rp(X)$, $\beta_2$ an instance of $\lp(X)$ and $\varphi \in \Phi$ is in $X$.

We say that $\Transition{\Phi}{\alpha}{\Psi}$ is \emph{maximally consistent} if $\Transition{\Phi}{\alpha}{\Psi}$ is consistent and every test in $\alpha$ is an \mcs. In particular, for all subprograms $\beta = \beta_1;X?;\beta_2$ we have that $X \supseteq \{\varphi\mid\Mubox{\beta'_1}\varphi \in l(\beta), \beta'_1\text{ instance of } \beta_1\} \cup \{\Mudiam{\beta'_2} \psi \mid \psi \in r(\beta), \beta'_2 \text{ instance of } \beta_2 \}$.
We say that a large loop $\alpha^\circlearrowleft$ is maximally consistent at $\Phi$ if $\Transition{\Phi}{\alpha}{\Phi}$ is maximally consistent and, additionally, for all tests $X?$, 
\[
\{\Mudiam{(\beta_1;\varphi?\beta_2)^\circlearrowleft}\true \mid \beta_1\text{ instance of } \lp(X), \beta_2\text{ instance of } \rp(X),\varphi\in \Phi\} \subseteq X.
\]
\end{definition}
\begin{lemma}
\label{lem:large-prog-max}
Let $\Transition{\Phi}{\alpha}{\Psi}$ for a large program $\alpha$ be consistent. Then there is a large program $\alpha' \geq \alpha$ such that $\Transition{\Phi}{\alpha'}{\Psi}$ is maximally consistent. Moreover, if $\alpha^\circlearrowleft$ is a large loop such that $\Transition{\Phi}{\alpha^\circlearrowleft}{\Phi}$ is consistent, there is $\alpha' \geq \alpha$ such that $\Transition{\Phi}{\alpha'}{\Phi}$ is maximally consistent.
\end{lemma}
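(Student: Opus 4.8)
The plan is to prove the statement by induction on the structure of the large program $\alpha$, enlarging the test sets step by step until each is an \mcs and carries the closure formulae demanded by maximal consistency, while using Lemma~\ref{lem:intermediate-set-cons} (and Lemma~\ref{lem:loop-set-cons} in the loop case) to guarantee that consistency is never lost. Since enlarging a test set only adds instances, every such enlargement automatically produces a large program $\alpha' \geq \alpha$, so it suffices to maintain consistency throughout. In the base case $\alpha = a$ there are no tests, and $\Transition{\Phi}{a}{\Psi}$ is vacuously maximally consistent, so we take $\alpha' = a$.

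For the composition case $\alpha = \alpha_1; X?; \alpha_2$ I would first enlarge the test set $X$. Consistency of $\Transition{\Phi}{\alpha}{\Psi}$ says that $\Mubox{\beta'}\neg\psi \notin \Phi$ for every instance $\beta'$ of $\alpha$ and every $\psi \in \Psi$; as $\Phi$ is an \mcs, duality~\eqref{ax:duality} rewrites this as $\Mudiam{\alpha'_1;\chi?;\alpha'_2}\psi \in \Phi$ for all instances $\alpha'_1$ of $\alpha_1$, $\alpha'_2$ of $\alpha_2$, all $\chi \in X$, and all $\psi \in \Psi$. This is exactly the hypothesis of Lemma~\ref{lem:intermediate-set-cons}, which therefore lets me adjoin to $X$ all the formulae $\varphi$ with $\Mubox{\alpha'_1}\varphi \in \Phi$ and all $\Mudiam{\alpha'_2}\psi$ with $\psi \in \Psi$ that maximal consistency requires, without destroying consistency of the test set; a Lindenbaum/Zorn extension then upgrades the result to an \mcs $X'$. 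I would then recurse, treating $\alpha_1$ with right set $X'$ and $\alpha_2$ with left set $X'$, where all contexts are now \mcs, and apply the induction hypothesis to each. For $\alpha = \alpha_1 \cap \alpha_2$ the two conjuncts inherit $l = \Phi$ and $r = \Psi$, so I would process their (disjoint) tests by the induction hypothesis and intersect the results, using that the consistency condition is stated uniformly over all subprograms -- in particular over $\alpha$ itself -- to carry the intersection's own consistency along.

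The large-loop case follows the same outline with $\Phi = \Psi$, but Lemma~\ref{lem:loop-set-cons} replaces Lemma~\ref{lem:intermediate-set-cons}. Its hypotheses are exactly the diamond-loop formulae guaranteed by loop-consistency of $\Transition{\Phi}{\alpha^\circlearrowleft}{\Phi}$ (again read off via duality), and its conclusion furnishes both the consistency of adjoining the required loop-closure formulae $\Mudiam{(\beta_1;\varphi?;\beta_2)^\circlearrowleft}\true$ to each test set and, through its internal appeal to rules~\eqref{ax:V} and~\eqref{ax:apo}, the absence of the forbidden formulae $\Mubox{(\beta_1;\varphi?;\beta_2)^\circlearrowleft}\false$ with $\beta_1$ an instance of $\rp(X)$ and $\beta_2$ an instance of $\lp(X)$. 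The same Lindenbaum step and recursion then apply.

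I expect the main obstacle to be the bookkeeping forced by the interdependence of the test sets. Enlarging $X$ turns it into the right set of $\alpha_1$ and the left set of $\alpha_2$, so the closure formulae that must later be added inside $\alpha_1$ and $\alpha_2$ depend on the already-enlarged $X'$; this dictates processing the outermost test first, so that all contexts are \mcs before the recursion, and forces one to re-verify the hypotheses of Lemma~\ref{lem:intermediate-set-cons} (resp.\ Lemma~\ref{lem:loop-set-cons}) against the updated contexts. More delicate still is that enlarging a test set creates new instances $\beta'$, so one must check that none of them yields a forbidden $\Mubox{\beta'}\neg\psi$ in the relevant left set; controlling this quantification over all instances simultaneously -- rather than for one instance at a time -- is the step where the two consistency lemmas have to be applied with the greatest care.
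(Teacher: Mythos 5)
Your overall architecture matches the paper's: process the tests outermost-first so that the surrounding contexts are already \mcs, use Lemma~\ref{lem:intermediate-set-cons} (resp.\ Lemma~\ref{lem:loop-set-cons}) to adjoin the closure formulae that maximal consistency demands, and recurse into $\alpha_1$ and $\alpha_2$. But there is a genuine gap at the step you dismiss as ``a Lindenbaum/Zorn extension then upgrades the result to an \mcs $X'$.'' A plain Lindenbaum extension of the enlarged test set to an \mcs of the logic is not enough: the extension must also be a \emph{safe replacement}, i.e.\ substituting it for $X$ inside $\alpha$ must keep the whole transition $\Transition{\Phi}{\alpha}{\Psi}$ consistent, which means no instance of the modified program may produce a formula $\Mubox{\beta'}\neg\psi$ lying in the relevant left set. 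Consistency of the test set by itself does not guarantee this, and your closing paragraph correctly identifies the danger (new instances created by enlarging $X$) but wrongly locates its resolution in ``careful application'' of the two consistency lemmas --- those lemmas only cover the specific closure formulae, not an arbitrary formula $\varphi$ that a Lindenbaum step would have to decide.

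The paper resolves this with an argument you are missing: apply Zorn's Lemma not to arbitrary consistent supersets of $X'$ but to the family $\mathcal{X}$ of consistent supersets whose substitution keeps $\Transition{\Phi}{\alpha}{\Psi}$ consistent, obtain a maximal element $X^*$ of $\mathcal{X}$, and then prove $X^*$ is an \mcs by contradiction using axiom~\eqref{ax:V}. Concretely, if neither $\varphi$ nor $\neg\varphi$ can be safely added to $X^*$, then (after closing test sets and $\Psi$ under conjunction so that the two offending witnesses can be taken with the same instance and the same $\psi$) one obtains both $\Mubox{\alpha_1\ullcorner\beta_1;\varphi?;\beta_2/\beta\ulrcorner}\neg\psi \in \Phi$ and $\Mubox{\alpha_1\ullcorner\beta_1;\neg\varphi?;\beta_2/\beta\ulrcorner}\neg\psi \in \Phi$; axiom~\eqref{ax:V} then collapses these to $\Mubox{\alpha_1}\neg\psi \in \Phi$ with the test $(\varphi\vee\neg\varphi)?$, contradicting the fact that $X^*$ itself was a safe replacement. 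Without this use of~\eqref{ax:V} (and the conjunction-closure bookkeeping it requires), your maximal safe superset could fail to decide some formula and the construction would stall before reaching an \mcs. The rest of your outline, including the loop case via Lemma~\ref{lem:loop-set-cons}, is in line with the paper once this step is repaired.
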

\begin{proof}
Let $\Psi, \Phi$ and $\alpha$ be as in the lemma. For the case of a large loop, set $\Psi =\Phi$. For convenience, we assume that the test sets in every subprogram of $\alpha$ are closed under conjunctions, if not, we close them under conjunctions. Clearly this will not make $\Transition{\Phi}{\alpha}{\Psi}$ inconsistent. 

The proof proceeds recursively: Assume that $\beta$ is a subprogram of $\alpha$ such that $l(\beta)$ and $r(\beta)$ are already \mcs. There are three cases: $\beta = a$ for some atomic program $a$, $\beta = \beta_1 \cap \beta_2$ and $\beta = \beta_1;X?,\beta_2$. In the first case, there is nothing left to do. In the second case, convert all the test sets in $\beta_1$ into \mcs such that $\Transition{\Phi}{\alpha}{\Psi}$ stays consistent. Then repeat the same procedure with $\beta_2$.

In the third case, we have to find an \mcs $X^*$ that is a superset of $X$ such that replacing $X$ by $X^*$ will not break consistency of $\Transition{\Phi}{\alpha}{\Psi}$. W.l.o.g., $X$ is already closed under $\vdash$. By Lemma~\ref{lem:intermediate-set-cons}, the set $X' = X \cup \{\varphi\mid\Mubox{\beta_1}\varphi \in l(\beta)\}\cup \{\Mudiam{\beta_2}\psi\mid\psi \in r(\beta)\}$ is consistent. Moreover, replacing $X$ by $X'$ will not make $\Transition{\Phi}{\alpha}{\Psi}$ inconsistent because otherwise, $\Phi$ would be inconsistent. By Lemma~\ref{lem:loop-set-cons}, in the case of a large loop, set $X'$ as 
\begin{align*}
X \cup &  \{\varphi\mid\Mubox{\beta_1}\varphi \in l(\beta)\} \\
        \cup & \{\Mudiam{\beta_2}\psi\mid\psi \in r(\beta)\} \\
        \cup & \{\Mudiam{((\beta_2;\psi?;\rp(r(\beta)));\upsilon?;(\lp(l(\beta));\varphi?;\beta_1))^\circlearrowleft}\true\mid\psi\in r(\beta),\upsilon \in \Phi, \varphi\in l(\beta) \} 
\end{align*} 
which is also consistent.
Now consider the set $\mathcal{X}$ of all consistent supersets of $X'$ such that \Transition{\Phi}{\alpha}{\Psi} stays consistent if $X$ is replaced by a set from $\mathcal{X}$. This set is nonempty, because it contains $X'$. It is also partially ordered by set inclusion and the union of any chain of sets from $\mathcal{X}$ is in $\mathcal{X}$ for otherwise, there would be a minimal set in the chain which also is not in $\mathcal{X}$. 

By Zorn's Lemma, $\mathcal{X}$ contains a maximal element $X^*$. We argue that $X^*$ is an \mcs. Assume otherwise, then there is $\varphi \in \pdocap$ such that neither $\varphi$ nor $\neg \varphi$ are in $X^*$. Since $X^*$ is consistent, one of $\varphi$ and $\neg \varphi$ can be added to $X^*$ without losing consistency of $X^*$. Hence, by assumption, replacing $X$ in $\beta$ by either of $X \cup \{\varphi\}$ and $X \cup \{\neg \varphi\}$ will make $\Transition{\Phi}{\alpha}{\Psi}$ inconsistent, but using $X^*$ itself does not. Then there are instances $\alpha_1$ and $\alpha_2$ of $\alpha$ such that both $\Mubox{\alpha_1\ullcorner(\beta_1;\varphi?;\beta_2)/\beta\ulrcorner} \neg \psi_1$ and $\Mubox{\alpha_2\ullcorner\beta_1;\neg\varphi?;\beta_2/\beta\ulrcorner} \neg \psi_2$ are in $\Phi$ for $\psi_1,\psi_2\in \Psi$. Since $\Psi$ and all test sets are closed under conjunction, we can assume that $\alpha_1 = \alpha_2$ and $\psi_1 = \psi_2$. But $\vdash \Mubox{\alpha_1\ullcorner(\beta_1;(\varphi \vee \neg \varphi)?;\beta_2)/\beta\ulrcorner}\neg \psi_1 \leftrightarrow \Mubox{\alpha_1}\neg\psi$ and $\Mubox{\alpha_1}\neg\psi \in \Phi$, which is a contradiction to $X^*$ being a safe replacement for $X$. This contradiction stems from the assumption that $X^*$ is not maximal. Hence, $X^*$ is the desired \mcs. The process continues recursively with $\beta_1$ and $\beta_2$.

It remains to argue that for all subprograms $\beta = \beta_1;X?;\beta_2$ we have that the set inclusion $X \supseteq \{\varphi\mid\Mubox{\beta'_1}\varphi \in l(\beta), \beta'_1\text{ instance of } \beta_1\} \cup \{\Mudiam{ \beta'_2} \psi \mid \psi \in r(\beta), \beta'_2 \text{ instance of } \beta_2 \}$ holds. For the first component, this is because we made sure the relevant $\varphi$ are in $X$ before proceeding to make the tests in $\beta_1$ maximal. Since $\vdash \Mubox{\alpha_1;\Phi?;\alpha_2}\psi$ entails $\Mubox{\alpha_1;\Phi'?;\alpha_2}\psi$ for $\Phi' \supseteq \Phi$, there is nothing left to prove.
For the second part, assume that there is an instance $\beta'_2$ of $\beta_2$ such that $\Mudiam{\beta'_2}\psi \notin X$ for some $\psi \in r(\beta)$. Since $X$ is an \mcs, $\Mubox{\beta'_2}\neg \psi \in X$. But then $\Transition{\Phi}{\alpha}{\Psi}$ is not even consistent, which contradicts the fact that all of the induction steps maintain consistency. In the case of a large loop, the same argument entails that all formulae required for a large loop to be maximally consistent are present at the tests.
\end{proof}

\subsection{Construction of the Canonical Model}
For the rest of this section, we assume, that all formulae are in the form defined in Lemma \ref{lem:program-normal-form}. Let $\tau'$ be $\tau$ extended with accessibility relation symbols $\alpha$ and $\alpha'$ for all \emph{large} $\pdocap[\tau]$-programs $\alpha$. We use $\alpha$-edges for \Forw-programs and $\alpha'$-edges for \Cyc-programs to better differentiate the two types and refer to the former as abstract forward edges and the latter as abstract loop edges.


Each point in the canonical model we construct is labeled by an $\pdocap[\tau]$-\mcs such that, after the construction is complete, a formula holds at a point if and only if it is in the \mcs that labels that point. The construction proceeds inductively. Each induction step consists of two stages: In the first stage, we add new points that witness the truth of diamond-type formulae $\langle \alpha \rangle \varphi$ at nodes of the previous induction step. These new states are connected to the previous ones via abstract $\beta$-edges from $\tau'\setminus \tau$, where $\beta$ is a suitable large program derived from $\alpha$. In the second stage these abstract $\beta$-edges are converted into subgraphs such that if there is an abstract $\beta$-edge from a node $u$ to a node $v$, then $u$ is connected via $\alpha$ to $v$. This is done by adding edges for abstract subprograms of $\beta$ and intermediate points, if necessary. The subgraph created for this is called the \emph{arena} witnessing that $\alpha$ connects $u$ to $v$. The whole process proceeds in a fashion such that no box-type formulae of the form $[\alpha]\varphi$ are violated.

The desired canonical model is the $\tau$-reduct of the limit of the inductive process. During this process, many points are labeled by the same \mcs. Since this can not be avoided (see Section~\ref{sec:canonical}), labels of points are more complex.

Let $\operatorname{mcs}(\pdocap)$ be the set of all maximally consistent \pdocap-sets. Then
\begin{align*}
\operatorname{Gen}_0 = &\, \operatorname{mcs}(\pdocap) \\
\operatorname{Gen}_{i+1} = &\, \{\arena(\Psi,\alpha,l)\mid \Psi \in \operatorname{mcs}(\pdocap), \alpha \in \Forw,l\in \operatorname{Gen}_i  \} \\
                        & \,\cup \{\arena(\alpha^\circlearrowleft,l)\mid\alpha \in \Forw,l\in \operatorname{Gen}_i\}
\end{align*}
with $\arena(\Psi,\alpha,l)$, $\arena(\alpha,l)$ and $\dom(l)$ defined inductively as follows:
\begin{itemize}
\item If $l \in \operatorname{Gen}_0$ then $\dom(l) = l$.
\item If not $\Mudiam{\alpha}\psi \in \dom(l)$ for all $\psi \in \Psi$, then $\arena(\Psi,\alpha,l)$ is empty.
\item If not $\Mudiam{\alpha^\circlearrowleft}\true \in \dom(l)$ then $\arena(\alpha^\circlearrowleft,l)$ is empty.
\item If $\Mudiam{\alpha}\varphi \in \dom(l)$ for all $\varphi$ in some \mcs $\Phi$, then there is a large program $\alpha'$ such that $\Transition{\Psi}{\alpha'}{\Phi}$ is maximally consistent. Then $\arena(\Phi,\alpha,l)$ is the nodes $l$, $r =(\Phi,\alpha,l)$ together with the abstract $\alpha'$-forward edge from $l$ to $r$ and the subgraph induced by it. Moreover, $\dom(r) = \Phi$.
\item If $\Mudiam{\alpha^\circlearrowleft}\true \in \dom(l)$ then there is an abstract $\alpha'$-loop such that $\Transition{\Psi}{\alpha'^\circlearrowleft}{\Psi}$ is maximally consistent. Then $\arena(\alpha^\circlearrowleft,l)$ is $l$ together with the abstract $\alpha'$-loop edge and the subgraph induced by it.
\end{itemize}
The subgraphs induced by abstract forward edges are again defined inductively:
\begin{itemize}
\item The subgraph induced by an abstract forward edge or an abstract loop edge $\alpha$ with $\alpha$ of the form $a$ for some atomic program $a \in \tau$ is an $a$-edge.
\item The subgraph induced by an abstract forward edge of the form $\alpha = \alpha_1;X?;\alpha_2$ from $u$ to $v$ consists of a node $w$ with $\dom(w) = X$, an abstract $\alpha_1$-forward edge from $u$ to $w$, an abstract $\alpha_2$-forward edge from $w$ to $v$ and the subgraphs induced by the abstract $\alpha_i$. Note that, by consistency of $\Transition{u}{\alpha}{v}$, we have $X\supseteq \{\varphi\mid\Mubox{\alpha'_1}\varphi \in \dom(u)\} \cup \{\Mudiam{\alpha'_2}\psi\mid \psi \in \dom(v)\}$ for all instances $\alpha'_1$ of $\alpha_1$ and $\alpha'_2$ of $\alpha_2$.
\item The subgraph induced by an abstract forward edge of the form $\alpha = \alpha_1\cap\alpha_2$ from $u$ to $v$ consists of abstract forward edges $\alpha_1$ and $\alpha_2$ from $u$ to $v$ and the subgraphs induced by them.
\end{itemize}
For abstract loop edges, the process is similar, but we annotate nodes with the programs needed to complete the loop in question and the set from which the loop starts. For an abstract $\alpha$-loop edge at $u$ set $l(u) = r(u) = \true?$.
\begin{itemize}
\item The subgraph induced by an abstract loop edge of the form $\alpha = \alpha_1;X?;\alpha_2$ from $u$ to $v$, which is part of a loop at $s$, consists of a node $w =(\lp(w),X,\rp(w),s)$ with $\dom(w) = X$, $\lp(w) = (\lp(u);\dom(u)?;\alpha_1)$, $\rp(w) = (\alpha_2;\dom(v)?;\rp(v))$, an abstract $\alpha_1$-loop edge from $u$ to $w$, an abstract $\alpha_2$-loop edge from $w$ to $v$ and the subgraphs induced by the abstract loop edges $\alpha_i$. Note that, by consistency of \Transition{u}{\alpha}{v} and by Lemma~\ref{lem:large-prog-max}, we have 
\begin{align*}
X\supseteq     &\{\varphi\mid\Mubox{\alpha'_1}\varphi \in \dom(u)\}\\
\cup &\{\Mudiam{\alpha'_2}\psi\mid \psi \in \dom(w)\} \\
\cup & \{\Mudiam{\big((\alpha'_2;\psi?;\beta'_2);\upsilon?;(\beta'_1;\varphi?;\alpha_1)\big)^\circlearrowleft}\true\mid \psi\in v, \upsilon\in s, \varphi\in u\}
\end{align*} 
for all instances $\alpha'_1,\alpha'_2,\beta_2,\beta_1$ of $\alpha_1,\alpha_2,l(v),r(v)$.
\item The subgraph induced by an abstract loop edge of the form $\alpha = \alpha_1\cap\alpha_2$ from $u$ to $v$ consists of abstract loop edges $\alpha_1$ and $\alpha_2$ from $u$ to $v$ and the subgraphs induced by them.
\end{itemize}
\begin{definition}
The canonical model for $\pdocap[\tau]$ is the $\tau$-reduct of the structure $\mathfrak{A} = \langle A, \{ p^\mathfrak{A} \}_{p \in \mathcal{P}}, \linebreak \{ R^\mathfrak{A} \}_{R \in \mathcal{R}} \rangle$, such that $A = \bigcup_{i=0}^\infty \operatorname{Gen}_i$,  $P^\mathfrak{A} = \{l \in A\mid P \in \dom(l) \}$ and $R^\mathfrak{A}$ as described above. The union is meant to be disjoint, with the exception that points $l \in \operatorname{Gen}_i$ and their counterparts in sets of the form $\arena(\Phi,\alpha,l)$ and $\arena(\alpha^\circlearrowleft,l)$ are identified.
\end{definition}
\begin{lemma}
\label{lemma:boxesincanmodel}
\begin{enumerate}
\item \label{item:lemma:boxesincanmodel} If there is an abstract $\alpha$-forward edge from $u$ to $v$, then for all $[\alpha']\psi \in \dom(u)$ and $\alpha'$ an instance of $\alpha$, we have $\psi \in \dom(v)$.
\item \label{item:lemma:arenenastreelike} The set of arenas in $\mathfrak{A}$ decomposes $\mathfrak{A}$ into a forest-like structure. Any two arenas share at most one node, and any path from one arena to the other must go trough that node if it exists. For disjoint arenas, there is a node so that any path from one arena to the other must go through that node.
\item \label{item:lemma:cyclesknowntonodes} Any cycle in $\mathfrak{A}$ consists purely of nodes induced by abstract loop edges. Any $\alpha$-cycle at a node $u$ that consists purely of loop edges from the same arena is such that $\langle\alpha'^\circlearrowleft\rangle\true \in \dom(u)$ for any instance $\alpha'$ of $\alpha$.
\end{enumerate}
\end{lemma}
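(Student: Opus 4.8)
The plan is to prove the three items in order, each leaning on the consistency notions attached to the abstract edges and on the inductive shape of the construction of $\operatorname{Gen}_i$. For item~\ref{item:lemma:boxesincanmodel} I would argue directly from the consistency condition underlying the edge. By construction an abstract $\alpha$-forward edge from $u$ to $v$ exists only because $\Transition{\dom(u)}{\alpha}{\dom(v)}$ is maximally consistent, so at the top level $l(\alpha) = \dom(u)$ and $r(\alpha) = \dom(v)$. Let $\Mubox{\alpha'}\psi \in \dom(u)$ with $\alpha'$ an instance of $\alpha$. Since $\dom(v)$ is an \mcs it suffices to exclude $\neg\psi \in \dom(v)$. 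If $\neg\psi \in \dom(v) = r(\alpha)$, then instantiating the consistency condition with $\beta = \alpha$ and the chosen $\psi$ taken to be $\neg\psi$ forces $\Mubox{\alpha'}\neg\neg\psi \notin \dom(u)$, equivalently (as $\dom(u)$ is an \mcs closed under provable equivalence) $\Mubox{\alpha'}\psi \notin \dom(u)$, contradicting the hypothesis. Hence $\psi \in \dom(v)$.

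For item~\ref{item:lemma:arenenastreelike} I would proceed by induction on the generation index $i$, tracking which nodes of each freshly created arena are new and which are inherited. The key invariant is that each arena $\arena(\Phi,\alpha,l)$ or $\arena(\alpha^\circlearrowleft,l)$ introduces only new intermediate nodes $w$ (encoded as tuples recording $\lp(w)$, the test set, $\rp(w)$ and the base, so with $\dom(w)=X$) and, in the forward case, a new target $r=(\Phi,\alpha,l)$; these fresh nodes are pairwise distinct across arenas precisely because their tuple encodings record the generating arena. Consequently the only node an arena can share with the rest of the structure is its attachment point $l$, together with the single node $r$ in the case where a forward target is reused as the root of a later arena. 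This yields that any two arenas share at most one node, and since every path entering or leaving an arena must pass through that unique inherited node, that node is a cut vertex, giving the forest-like decomposition. The delicate point I would stress is that the subgraph induced by an intersection $\alpha_1 \cap \alpha_2$ glues two parallel refinements between the same endpoints $u,v$, so an arena is not internally tree-shaped; the forest claim is about arenas as units, and I would argue that all such intersection-induced diamonds remain confined inside a single arena.

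For item~\ref{item:lemma:cyclesknowntonodes} I would first invoke item~\ref{item:lemma:arenenastreelike} to observe that the abstract forward edges alone induce a DAG: a forward arena attaches to already-built nodes only at its root and otherwise contributes only fresh descendants, so no cycle can close using forward edges alone. Hence every cycle traverses at least one abstract loop edge, and by the cut-vertex structure such a cycle is trapped inside a single loop arena $\arena(\alpha^\circlearrowleft,l)$ and uses only nodes born from its abstract loop edges. For the second statement I would unfold the construction of $\arena(\alpha^\circlearrowleft,l)$: it is produced precisely when $\Mudiam{\alpha'^\circlearrowleft}\true \in \dom(l)$ for the large loop that is maximally consistent at $\dom(l)$ by Lemma~\ref{lem:large-prog-max}, and the maximal-consistency conditions for large loops place, at every test set $X$ along the loop, the required formulae $\Mudiam{(\beta_1;\varphi?;\beta_2)^\circlearrowleft}\true$ with $\beta_1$ an instance of $\lp(X)$ and $\beta_2$ an instance of $\rp(X)$. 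Reading the cut point $u$ of the cycle as such a test set and rotating the loop to begin there, these formulae are exactly $\Mudiam{\alpha'^\circlearrowleft}\true \in \dom(u)$ for any instance $\alpha'$ of the cyclic program traversed.

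I expect item~\ref{item:lemma:arenenastreelike} to be the main obstacle: the careful bookkeeping of fresh versus inherited nodes, combined with the need to keep the intersection-induced diamonds strictly inside a single arena, is what makes the cut-vertex separation go through, and items~\ref{item:lemma:boxesincanmodel} and~\ref{item:lemma:cyclesknowntonodes} both ultimately rely on that structural picture.
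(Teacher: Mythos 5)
The paper's own ``proof'' is the one-line remark that the lemma is immediate from the construction of the model, and your proposal is a correct, more detailed unfolding of exactly that construction: item~1 from the top-level consistency condition attached to abstract edges, item~2 from the stipulated freshness/disjointness of newly created arena nodes with the attachment point as cut vertex, and item~3 from the acyclicity of forward edges plus the (maximal-)consistency conditions on large loops. So your approach coincides with the paper's; you have simply supplied the details the authors omitted.
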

The proof is immediate from the construction of the model.


\subsection{Soundness and Completeness of the Canonical Model}

\begin{lemma}[Gateway Lemma]
\label{lem:gateway}
Let $\alpha \in \Forw$ be a program which contains at least one occurrence of the $;$-operator. Let $\mathfrak{A}$ be a  Kripke structure and let $u$ and $w$ be not necessarily distinct worlds in $\mathfrak{A}$ such that $\mathfrak{A}$ is a minimal witness graph for $\Transition{u}{\alpha}{v}$. Let $v$ be a node different from $u$ and $w$ such that all paths of length $> 0$ from $u$ to $w$ must go through $v$.
Then there is $\beta = \beta_1;\beta_2$ such that $\vdash \beta\Rightarrow\alpha$ and $\Transition{u}{\beta_1}{w}$ and $\Transition{w}{\beta_2}{v}$.
\end{lemma}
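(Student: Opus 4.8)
The plan is to prove the statement by structural induction on the forward program $\alpha \in \Forw$, exploiting the fact that a witness graph for $\Transition{u}{\alpha}{v}$ is built up in lockstep with the grammar $\alpha_{\forw} \coloneqq a \mid \alpha_{\forw} \cap \alpha_{\forw} \mid \alpha_{\forw};\alpha_{\cyc};\alpha_{\forw}$. The base case $\alpha = a$ is vacuous as it contains no $;$-operator. Throughout, $w$ denotes the gateway node through which every path of length $>0$ from $u$ to $v$ must pass, with $w \notin \{u,v\}$. The guiding idea is to locate $w$ inside the inductively structured witness graph and then read off the decomposition from the corresponding structural position of $\alpha$.

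In the intersection case $\alpha = \alpha_1 \cap \alpha_2$ the witness graph is the union of witness graphs $G_1, G_2$ for $\Transition{u}{\alpha_1}{v}$ and $\Transition{u}{\alpha_2}{v}$. I would first argue that each $\alpha_i$ must contain a $;$: otherwise $\alpha_i$ is an intersection of atomic programs and its witness is a bundle of direct $u$-to-$v$ edges, giving a length-one $u$-to-$v$ path that avoids $w$ and contradicts the gateway property. The same path-avoidance reasoning shows that $w$ is a node of both $G_1$ and $G_2$ and is a gateway inside each, since a $u$-to-$v$ path in $G_i$ is a path in the union and so passes through $w$. After passing to minimal sub-witnesses --- which still contain $w$ and still have $w$ as gateway, because any $u$-to-$v$ path inside them lies in $\mathfrak{A}$ --- the induction hypothesis applies to each $\alpha_i$ and yields $\delta_1^i;\delta_2^i \Rightarrow \alpha_i$ with $\Transition{u}{\delta_1^i}{w}$ and $\Transition{w}{\delta_2^i}{v}$. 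Setting $\beta_1 \coloneqq \delta_1^1 \cap \delta_1^2$ and $\beta_2 \coloneqq \delta_2^1 \cap \delta_2^2$, the transitions through $w$ follow from the semantics of $\cap$, and $\beta = \beta_1;\beta_2 \Rightarrow \alpha$ follows by recomposing each intersected factor through $w$.

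In the sequential case $\alpha = \alpha_1;\alpha_{\cyc};\alpha_2$ let $x$ be the junction node at which the self-looping middle program $\alpha_{\cyc}$ sits. If $w = x$, the decomposition is immediate: $\beta_1 \coloneqq \alpha_1$ and $\beta_2 \coloneqq \alpha_{\cyc};\alpha_2$, with $\beta \equiv \alpha$ by associativity of $;$. If $w \neq x$, I would first rule out that $w$ is internal to the self-loop of $\alpha_{\cyc}$: the $u$-to-$v$ path running through $\alpha_1$ to $x$ and then through $\alpha_2$ never enters that loop, so any node internal to the loop can be avoided and cannot be a gateway. Hence $w$ lies strictly in the $\alpha_1$-region or (symmetrically) the $\alpha_2$-region. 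In the first subcase I show that $w$ separates $u$ from $x$ inside the $\alpha_1$-witness --- every $u$-to-$x$ path extends by an $\alpha_2$-path to a $u$-to-$v$ path and hence meets $w$, and $w$ lies on the $\alpha_1$-side --- so $\alpha_1$ contains a $;$ and the induction hypothesis gives $\delta_1;\delta_2 \Rightarrow \alpha_1$ factoring at $w$; then $\beta_1 \coloneqq \delta_1$ and $\beta_2 \coloneqq \delta_2;\alpha_{\cyc};\alpha_2$ work, with $\beta \Rightarrow \alpha$ because $\delta_1;\delta_2 \Rightarrow \alpha_1$.

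The step I expect to be the main obstacle is the geometric bookkeeping that makes these case distinctions legitimate, and this is exactly where minimality of the witness graph is essential. Concretely, I must establish that a minimal witness graph is forest-like, in the sense that the junction $x$ of the outermost sequential composition is a genuine cut vertex separating the $\alpha_1$- and $\alpha_2$-regions, and that nodes contributed purely by cyclic subprograms never lie on a $u$-to-$v$ path. This is the general-Kripke-structure analogue of the forest property recorded for the canonical model in Lemma~\ref{lemma:boxesincanmodel}(\ref{item:lemma:arenenastreelike}). Once this shape is available, the remaining verifications --- that $w$ is a gateway of the correct sub-witness, that minimal sub-witnesses retain $w$ as a gateway, and that the $\Rightarrow$-claims compose --- are routine, and both inductive cases close.
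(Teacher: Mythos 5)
Your overall strategy coincides with the paper's: induction on the structure of $\alpha$, with the base case occurring when the gateway node is itself the junction of the outermost sequential composition, the intersection case handled by applying the hypothesis to both conjuncts and recombining as $(\delta^1_1\cap\delta^2_1);(\delta^1_2\cap\delta^2_2)\Rightarrow\alpha_1\cap\alpha_2$, and the sequential case handled by pushing the gateway into one of the two factors. Your explicit check that each conjunct of an intersection must itself contain a $;$ (else a length-one path bypasses the gateway) is a detail the paper's proof glosses over, and is welcome.

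The one place where your plan diverges in substance is the sequential case, and there it has a gap. You propose to first \emph{locate} the gateway in the $\alpha_1$-region, the $\alpha_2$-region, or the cyclic loop, and you correctly identify that this requires the regions of a minimal witness graph to be glued together only at the junction, i.e.\ a forest-like shape. That property is what fails: minimality of a witness graph constrains which edges may be deleted, not which worlds coincide, so in an arbitrary Kripke structure the witness subgraphs for $\alpha_1$ and $\alpha_2$ may share nodes other than the junction $x$ (the forest property of Lemma~\ref{lemma:boxesincanmodel}, item~\ref{item:lemma:arenenastreelike}, is a feature of the canonical model's arenas, not of minimal witness graphs in general, and the Gateway Lemma is stated for arbitrary structures). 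If the gateway lies in both regions, your step ``extend a $u$-to-$x$ path by an $\alpha_2$-path and observe the gateway must be met on the $\alpha_1$-side'' breaks down, because the chosen extension may itself pass through the gateway. The paper avoids the issue entirely with a purely path-theoretic dichotomy: either every path from $u$ to the junction $x$ meets the gateway, or every path from $x$ to the endpoint does, since otherwise concatenating two gateway-avoiding paths yields a gateway-avoiding path between the endpoints, contradicting the hypothesis. Replacing your localisation argument by this dichotomy (and carrying along the same $;$-occurrence check before invoking the induction hypothesis on the chosen factor) closes the gap; no structural claim about minimal witness graphs is needed.
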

\begin{proof}
The proof is by induction on the construction of $\alpha$. Since $\alpha$ contains at least one occurrence of the $;$-operator, $\alpha \not = \bigcap_{i \in	I} R_i$ for a finite set of $R_i \in \mathcal{R}$. Hence, the base case is that $\alpha = \alpha_1;\alpha_2$ and $\Transition{u}{\alpha_2}{v}$ and $\Transition{v}{\alpha_2}{w}$. 

If $\alpha = \alpha_1;\alpha_2$ and not $\Transition{u}{\alpha_1}{v}$ or not $\Transition{v}{\alpha_2}{w}$, then there is $v'$ such that $\Transition{u}{\alpha_1}{v'}$ and $\Transition{v'}{\alpha_2}{w}$. There are two cases: All paths from $u$ to $v'$ go through $v$, or all paths from $v'$ to $w$ go through $v$. Otherwise, there is a path from $u$ to $w$ via $v'$ that does not go through $v$. If all paths from $u$ to $v'$ go through $v$, by the induction hypothesis there is $\alpha'_1;\alpha''_1$ such that $\Transition{u}{\alpha'_1}{v}$ and $\Transition{v}{\alpha''_1}{v'}$. Then $\beta_1 = \alpha'_1$ and $\beta_2 = \alpha''_1;\alpha_2$ are as desired. If all paths from $v'$ to $w$ go through $v'$, an application of the induction hypothesis yields $\alpha'_2$ and $\alpha''_2$ such that $\Transition{u}{\alpha_1;\alpha'_2}{v}$ and $\Transition{v}{\alpha''_2}{w}$. In both cases, clearly $\vdash [\alpha]p \Rightarrow [\beta]p$.

If $\alpha = \alpha_1 \cap \alpha_2$, then by the induction hypothesis there are $\alpha'_1;\alpha''_1$ and $\alpha'_2;\alpha''_2$ such that $\Transition{u}{\alpha'_i}{w}$ and $\Transition{v}{\alpha''_i}{v}$ for $i = 1,2$. Then, via Axiom~\eqref{ax:semicolon}, we have $\vdash (\alpha'_1 \cap \alpha'_2);(\alpha''_1 \cap \alpha''_2) \Rightarrow \alpha_1 \cap \alpha_2$.

\end{proof}

\begin{lemma}
\label{lem:articulation-point-of-set}
Let $\alpha$ be a program. Let $\mathfrak{A}$ be a  Kripke structure and let $u$ and $w$ be not necessarily distinct worlds in $\mathfrak{A}$ such that $\mathfrak{A}$ is a minimal witness graph for $\Transition{u}{\alpha}{w}$. Let $v$ be a node and $E$ a nonempty set of edges such that all paths from $u$ to $w$ are such that $v$ occurs before and after any edge $e \in E$. 

Then there are programs $\alpha'$ and $\beta$ such that $\alpha'$ contains an occurrence of $\Mudiam{\beta^\circlearrowleft}\true?$, $\vdash \alpha' \Rightarrow \alpha$ and the witness graph for $\alpha''$ contains no node from $X$.
\end{lemma}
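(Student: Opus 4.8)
The plan is to prove the statement by induction on the structure of $\alpha$, mirroring the proof of the Gateway Lemma (Lemma~\ref{lem:gateway}) while additionally tracking the enclosed edge set $E$. The crucial structural observation is that the hypothesis forces every edge of $E$ to lie strictly between two visits to $v$ on any path from $u$ to $w$: each $e \in E$ has an occurrence of $v$ before it and an occurrence of $v$ after it, so no edge of $E$ can appear before the first visit to $v$ or after the last one. Consequently all of $E$ lives inside a single $v$-to-$v$ loop region, and the two ends of the witness graph --- the part leading into the first visit of $v$ and the part leaving the last visit of $v$ --- are free of $E$-edges.

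First I would establish a factorisation. By an induction on $\alpha$ that parallels Lemma~\ref{lem:gateway}, I isolate programs $\alpha_1,\gamma,\alpha_2$ with $\Transition{u}{\alpha_1}{v}$, $\Transition{v}{\gamma}{v}$ and $\Transition{v}{\alpha_2}{w}$ such that $\vdash \alpha_1;\gamma;\alpha_2 \Rightarrow \alpha$ and the witness subgraph of $\gamma$ is exactly the loop region carrying all of $E$. The atomic base case is vacuous, since a single edge admits no nonempty $E$ with $v$ occurring before and after it. In the composition case the intermediate node of the witness graph lies either inside the loop region or outside it, and Lemma~\ref{lem:gateway}, applied through the first respectively the last visit of $v$, supplies the required sub-factorisation. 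In the intersection case $\alpha = \alpha_1' \cap \alpha_2'$ I apply the induction hypothesis to both conjuncts to obtain loops $\gamma_1,\gamma_2$ at $v$ and recombine them using axiom~\eqref{ax:semicolon}, precisely as in the intersection step of Lemma~\ref{lem:gateway}.

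With a loop program $\gamma$ satisfying $\Transition{v}{\gamma}{v}$ in hand, I would convert it into a test. Since $\gamma^\circlearrowleft = \gamma \cap \true? \Rightarrow \gamma$ by axiom~\eqref{ax:weak}, and since axiom~\eqref{ax:cyctotest} gives $\vdash \gamma^\circlearrowleft \Leftrightarrow (\Mudiam{\gamma^\circlearrowleft}\true)?$, I set $\beta \coloneqq \gamma$ and
\[
  \alpha' \enspace\coloneqq\enspace \alpha_1 ; \Mudiam{\beta^\circlearrowleft}\true? ; \alpha_2 .
\]
Then $\alpha'$ contains the required occurrence of $\Mudiam{\beta^\circlearrowleft}\true?$, and the chain $\alpha' \Leftrightarrow \alpha_1;\gamma^\circlearrowleft;\alpha_2 \Rightarrow \alpha_1;\gamma;\alpha_2 \Rightarrow \alpha$, obtained by program congruence together with the factorisation above, yields $\vdash \alpha' \Rightarrow \alpha$.

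Finally I would read off the witness-graph claim. A witness graph for $\alpha'$ decomposes into a witness graph for $\alpha_1$ from $u$ to $v$, the single node $v$ validating the test $\Mudiam{\beta^\circlearrowleft}\true?$, and a witness graph for $\alpha_2$ from $v$ to $w$; by the structural observation of the first paragraph none of these three parts uses an edge of $E$, so the forward witness graph for $\alpha'$ contains no edge of $E$ (equivalently, none of the nodes interior to the $v$-loop). The step I expect to be the main obstacle is the intersection case of the factorisation: one must argue that, under minimality of the witness graph, both conjuncts genuinely loop at the \emph{same} node $v$ and enclose the \emph{same} edge set, so that their loop programs can be intersected coherently via axiom~\eqref{ax:semicolon} without introducing a shortcut bypassing $v$ --- the same difficulty as in Lemma~\ref{lem:gateway}, now compounded by having to preserve the placement of every edge of $E$ inside the recombined loop.
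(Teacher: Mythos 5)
Your proposal is correct and follows essentially the same route as the paper: the paper's own proof is just the one-line remark ``by application of Lemma~\ref{lem:gateway} and the fact that $\vdash \beta^\circlearrowleft \Rightarrow \beta$'', and your factorisation $\alpha_1;\gamma;\alpha_2$ through the two visits of $v$, followed by the conversion of the loop $\gamma$ into the test $\Mudiam{\gamma^\circlearrowleft}\true?$ via axioms \eqref{ax:weak} and \eqref{ax:cyctotest}, is exactly the argument that one-liner leaves implicit. You also sensibly repair the statement's typographical slips ($\alpha''$ for $\alpha'$, $X$ for the $E$-enclosed region), and the intersection-case worry you flag is the same one already absorbed into the Gateway Lemma itself.
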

\begin{proof}
By application of Lemma~\ref{lem:gateway} and the fact that $\vdash \beta^\circlearrowleft \Rightarrow \beta$.
\end{proof}

\begin{lemma}
\label{lem:union-of-abstract-edges}
	Let $u, v \in A$ and $\alpha_1, \alpha_2$ be large programs. If there are abstract $R_{\alpha_1}$- and abstract $R_{\alpha_2}$-edges between $u$ and $v$ either both forward or loop edges, then either $\alpha_1 = \alpha_2$, or there is a large program $\alpha$ of the same kind such that $\alpha \geq \alpha_1$ and $\alpha \geq \alpha_2$, and there is an abstract $R_\alpha$-edge between $u$ and $v$.
\end{lemma}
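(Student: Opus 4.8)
The plan is to reduce the statement, via the structure of the model construction, to a purely combinatorial merging of test sets, and then to re-use the consistency machinery of Lemmas~\ref{lem:intermediate-set-cons} and~\ref{lem:loop-set-cons} together with the maximality statement of Lemma~\ref{lem:large-prog-max}. Throughout I treat the forward case in detail and indicate the (parallel) loop case at the end.

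First I would establish that two abstract edges of the same kind between $u$ and $v$ necessarily have the same \emph{shape}, i.e.\ the same underlying tree of atomic programs, intersections and compositions, differing only in their test sets. This is the decisive structural step. By definition of $\leq$, an instance of a large program records its entire shape, so $\alpha\geq\alpha_1$ already forces $\alpha$ and $\alpha_1$ to have the same shape; hence a common upper bound can exist only when $\alpha_1$ and $\alpha_2$ agree on their shape in the first place. I would prove this equality of shapes by induction along the arena construction, using Lemma~\ref{lemma:boxesincanmodel}(\ref{item:lemma:arenenastreelike}): the forest decomposition pins down, for the pair $(u,v)$, the articulation node through which every path must pass, so the outermost composition point of any abstract edge from $u$ to $v$ is forced, while an outermost intersection splits only into abstract edges that are again between $u$ and $v$. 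Iterating this identifies the shapes of $\alpha_1$ and $\alpha_2$ position by position.

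Given a common shape, I would define $\alpha$ by replacing, at every test position, the two test sets $X_1$ of $\alpha_1$ and $X_2$ of $\alpha_2$ by their union and leaving the atomic and intersection nodes untouched. The inequalities $\alpha\geq\alpha_1$ and $\alpha\geq\alpha_2$ are then immediate from the definition of an instance: any instance of $\alpha_i$ selects, at each test, a formula from $X_i$, which survives in the enlarged test set of $\alpha$, so the same ordinary program witnesses membership for $\alpha$ as well. The case $\alpha_1=\alpha_2$ occurs exactly when all corresponding test sets already coincide.

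It remains to show that $\alpha$ labels an actual abstract edge between $u$ and $v$, and this is where I expect the main obstacle. One must verify that $\Transition{u}{\alpha}{v}$ (with $l(\alpha)=\dom(u)$ and $r(\alpha)=\dom(v)$) is consistent; this does not follow formally, since an instance of a subprogram of $\alpha$ may now mix choices coming from $\alpha_1$ and from $\alpha_2$. The key observation I would use is that at each composition position the two edges are routed through the \emph{same} intermediate arena node, again by Lemma~\ref{lemma:boxesincanmodel}(\ref{item:lemma:arenenastreelike}), so both test sets there equal the $\dom$ of that node and their union adds nothing new; the genuine combination only happens at intersection positions, where it is harmless because the two sides are independently witnessed transitions between the same endpoints. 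Having secured consistency, I would invoke Lemma~\ref{lem:large-prog-max} to extend $\alpha$ to a maximally consistent large program of the same shape, and conclude from the saturation built into the $\operatorname{Gen}_i$ hierarchy that the corresponding abstract $R_\alpha$-edge is present. The loop case is analogous, with forward arenas replaced by loop arenas and the loop clauses of Lemmas~\ref{lem:large-prog-max} and~\ref{lem:loop-set-cons} in place of Lemma~\ref{lem:intermediate-set-cons}; the extra obligation is to preserve loop-consistency at $\Phi$, i.e.\ to ensure that no forbidden $\Mubox{(\beta_1;\varphi?;\beta_2)^\circlearrowleft}\false$ enters the merged test sets, which I would reduce to the corresponding property of $\alpha_1$ and $\alpha_2$.
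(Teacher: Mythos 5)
Your proof rests on a structural claim that is false: two abstract edges of the same kind between the same pair of nodes need \emph{not} have the same shape. The arena construction decomposes an abstract $(\beta_1\cap\beta_2)$-edge from $u$ to $v$ into abstract $\beta_1$- and $\beta_2$-edges \emph{both} from $u$ to $v$, and $\beta_1,\beta_2$ can be arbitrary large programs; e.g.\ for $(a\cap b)\cap(a;X?;c)$ you get abstract edges $a\cap b$ and $a;X?;c$ sharing both endpoints, which are neither equal nor of a common shape. So your first step (and hence the whole plan of merging test sets position by position) cannot go through. The root of the problem is that you took the definition of $\leq$ completely literally; under that reading the lemma itself would be refuted by the example above, so the intended reading of $\alpha\geq\alpha_i$ here is the looser one actually used later in Lemma~\ref{lem:box}, namely that instances of $\alpha$ provably imply instances of $\alpha_i$ -- which an intersection ancestor provides via axiom~\eqref{ax:weak}.

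The second genuine gap is your final step: even where the shapes do coincide, the program obtained by unioning the test sets is not an abstract edge of $\mathfrak{A}$. The $\operatorname{Gen}_i$ hierarchy creates, per arena, \emph{one} maximally consistent large program and the abstract edges arising from decomposing it; there is no saturation that would add an edge for every consistent (or maximally consistent) large program between two existing nodes, so ``the corresponding abstract $R_\alpha$-edge is present'' does not follow from Lemma~\ref{lem:large-prog-max}. The intended argument avoids both problems and is purely structural: every abstract edge other than the one inducing its arena has a parent edge whose top operator is $;$ or $\cap$; a $;$-decomposition routes its two children through a freshly created intermediate node (distinct for distinct programs), so it changes one endpoint, whereas a $\cap$-decomposition preserves both endpoints. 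Hence two abstract edges sharing source and sink must descend from a common parent solely via $\cap$-steps, and that common parent is the required $\alpha$. No consistency or maximality machinery is needed.
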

\begin{proof}
By the construction of arenas. Each abstract edge $\alpha$ that is not that inducing an arena itself has a parent edge $\alpha'$ in the inductive process such that the top operator of $\alpha'$ is either $;$ or $\cap$. Usage of the $;$-operator changes either source or sink of an abstract edge, and for different programs intermediate points are different. Since $\alpha_1$ and $\alpha_2$ share the same source and sink nodes, they must have a common parent solely via $\cap$-operators.
\end{proof}

\begin{lemma}
\label{lem:abstract-edges-common-art-point}
Let $u, v,w \in A$ and $\alpha_1, \alpha_2$ be large programs. If there is an abstract $\alpha_1$-edge from $u$ to $w$ and an abstract $\alpha_2$-edge from $w$ to $v$, then either there is an abstract $\alpha$-edge from $u$ to $v$ and $\alpha \geq \alpha_1;\alpha_2$, or all paths from $u$ to $v$ contain $w$. 
\end{lemma}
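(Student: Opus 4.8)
The plan is to locate both abstract edges inside the arena structure of $\mathfrak{A}$ and to argue by a case distinction on whether the $\alpha_1$-edge from $u$ to $w$ and the $\alpha_2$-edge from $w$ to $v$ lie in one common arena or in two different ones. The workhorse throughout is Lemma~\ref{lemma:boxesincanmodel}(\ref{item:lemma:arenenastreelike}): the arenas arrange $\mathfrak{A}$ into a forest in which any two arenas meet in at most one node and every path from one to the other is forced through that node. I would also keep Lemma~\ref{lem:union-of-abstract-edges} at hand to merge several abstract edges running between the same pair of endpoints into a single dominating one.

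First I would treat the easy case in which the two edges belong to distinct arenas $\mathcal{A}_1$ and $\mathcal{A}_2$. Since $w$ is the target of the $\alpha_1$-edge and the source of the $\alpha_2$-edge, it lies in both $\mathcal{A}_1$ and $\mathcal{A}_2$, so it must be precisely the single node they share. As $u \neq w$ and $v \neq w$ --- no large program is a bare test, so every abstract edge moves at least one atomic step --- Lemma~\ref{lemma:boxesincanmodel}(\ref{item:lemma:arenenastreelike}) forces every path from $u$ to $v$ through $w$, and the second alternative holds.

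For the main case, both edges are sub-edges in the recursive decomposition of a single arena's inducing edge. Here I would use that, by the inductive definition of the induced subgraphs, a $\cap$-split never creates a node whereas a $;$-split $\delta = \delta_1;X?;\delta_2$ running from some $p$ to some $q$ introduces exactly one fresh intermediate node, so that $w$ has a unique such defining split. Every abstract edge ending in $w$ then lies in the left sub-arena of that split and every abstract edge leaving $w$ in the right one, and the two sub-arenas are node-disjoint apart from $w$; hence $u$ sits in the left part and $v$ in the right part. If $u=p$ and $v=q$, the parent edge $\delta$ is itself an abstract edge from $u$ to $v$, and using Lemma~\ref{lem:union-of-abstract-edges} to absorb the $\cap$-branches realising the $\alpha_i$-edges one obtains a direct abstract $\alpha$-edge with $\alpha = \alpha_1;X?;\alpha_2 \geq \alpha_1;\alpha_2$: the first alternative. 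Otherwise at least one of $u,v$ is internal to its sub-arena, no direct $u$-$v$ edge bypasses the split, and since the two sub-arenas communicate only through $w$ --- and any excursion leaving the whole arena re-enters through its single attachment node by Lemma~\ref{lemma:boxesincanmodel}(\ref{item:lemma:arenenastreelike}) --- every $u$-$v$ path passes through $w$: the second alternative.

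The hard part will be making the separation argument rigorous for abstract loop edges, whose arena is cyclic and where one might fear an alternative route around the cycle that avoids $w$. I would rule this out with Lemma~\ref{lemma:boxesincanmodel}(\ref{item:lemma:cyclesknowntonodes}): the edges of such a cycle are consistently oriented along the loop, so the intermediate node of a $;$-split still strictly separates the stretch of the loop before it from the stretch after it, and any genuine alternate route would be a parallel edge between fixed endpoints already subsumed by Lemma~\ref{lem:union-of-abstract-edges}. The remaining checks are routine: that $w$ being a source or sink of its arena reduces to the two-arena case, and that $\alpha = \alpha_1;X?;\alpha_2 \geq \alpha_1;\alpha_2$ holds because enlarging a middle test set only enlarges a large program in the $\leq$-order.
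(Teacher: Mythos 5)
Your proof is correct and follows essentially the same route as the paper's (much terser) argument: a case split on whether the configuration spans two arenas, handled via Lemma~\ref{lemma:boxesincanmodel}(\ref{item:lemma:arenenastreelike}), versus a single arena, handled by examining the most common parent edge in the inductive decomposition and observing that a $;$-split forces all paths through the intermediate node $w$ while a pure $\cap$-ancestry yields the direct abstract edge via Lemma~\ref{lem:union-of-abstract-edges}. The additional detail you supply about the uniqueness of the defining split of $w$ and the loop-edge case is a faithful elaboration of what the paper leaves implicit.
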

\begin{proof}
By construction of arenas. If $u$ and $v$ are in different arenas the claim follows from Item~\eqref{item:lemma:arenenastreelike} of Lemma~\ref{lemma:boxesincanmodel}. Otherwise, each abstract edge $\alpha$ that is not inducing an arena itself has a parent edge $\alpha'$ in the inductive process such that the top operator of $\alpha'$ is either $;$ or $\cap$. If there is no common parent of $\alpha_1$ and $\alpha_2$ from $u$ to $v$, then at least one of the $\alpha_i$ must have a program with top operator $;$ as its most common parent. By the construction of arenas, all paths from $u$ to $v$ must go through $w$.
\end{proof}

\begin{lemma}
\label{lem:box}
	For all points $u,v \in A$ with and all programs $\alpha$ such that $\Transition{u}{\alpha}{v}$ and all formulae $\psi$ it holds that: If $[\alpha]\psi \in dom(u)$, then $\psi \in dom(v)$.
\end{lemma}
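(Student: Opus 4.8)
The plan is to prove the statement by induction on the structure of the program $\alpha$, which by Lemma~\ref{lem:program-normal-form} we may assume to lie in $\Forw \cup \Cyc$, so that $\alpha$ is built from atomic programs, tests, composition and intersection. Two of the cases are routine. If $\alpha = a$ is atomic, then the witnessing $a$-edge from $u$ to $v$ was introduced as the refinement of an abstract $a$-edge, and since $a$ is an instance of itself Lemma~\ref{lemma:boxesincanmodel}(\ref{item:lemma:boxesincanmodel}) immediately yields $\psi \in \dom(v)$ from $\Mubox{a}\psi \in \dom(u)$. If $\alpha = \alpha_1;\alpha_2$, I would pick an intermediate world $w$ with \Transition{u}{\alpha_1}{w} and \Transition{w}{\alpha_2}{v}; axiom~\eqref{ax:semicolon} together with closure of $\dom(u)$ under $\vdash$ gives $\Mubox{\alpha_1}\Mubox{\alpha_2}\psi \in \dom(u)$, and applying the induction hypothesis first to $\alpha_1$ (with target formula $\Mubox{\alpha_2}\psi$) and then to $\alpha_2$ gives $\psi \in \dom(v)$.

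The heart of the argument is the intersection case $\alpha = \alpha_1 \cap \alpha_2$, which cannot be reduced to its components: from $\alpha_1 \cap \alpha_2 \Rightarrow \alpha_1$ (axiom~\eqref{ax:weak}) one only obtains $\Mubox{\alpha_1}\psi \imp \Mubox{\alpha_1 \cap \alpha_2}\psi$, which is the wrong direction. Here \Transition{u}{\alpha_1}{v} and \Transition{u}{\alpha_2}{v} hold simultaneously, so I would fix a minimal witness graph for \Transition{u}{\alpha_1 \cap \alpha_2}{v} and split on whether it contains a node through which every $u$--$v$ path must pass. If it does --- in particular when $u$ and $v$ lie in different arenas, by the forest structure of Lemma~\ref{lemma:boxesincanmodel}(\ref{item:lemma:arenenastreelike}) together with Lemma~\ref{lem:abstract-edges-common-art-point} --- the Gateway Lemma~\ref{lem:gateway} produces a decomposition $\beta = \beta_1;\beta_2$ with $\vdash \beta \Rightarrow \alpha$ that cuts at this articulation point. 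Then $\vdash \Mubox{\alpha}\psi \imp \Mubox{\beta}\psi$ (monotonicity of the box under $\Rightarrow$, obtained via rule~\texttt{PSub}), so $\Mubox{\beta}\psi \in \dom(u)$ and the already-settled composition case applies to $\beta = \beta_1;\beta_2$. If instead there is no such articulation point, then $u$ and $v$ are joined directly within a single arena by abstract edges carrying $\alpha_1$ and $\alpha_2$; repeated use of Lemma~\ref{lem:union-of-abstract-edges} merges these into a single abstract $\gamma$-edge from $u$ to $v$ with $\gamma \geq$ both, of which $\alpha_1 \cap \alpha_2$ is an instance, and Lemma~\ref{lemma:boxesincanmodel}(\ref{item:lemma:boxesincanmodel}) finishes the case.

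Two further points complete the induction. The cyclic programs must be treated separately because cycles in the model arise only from loop edges (Lemma~\ref{lemma:boxesincanmodel}(\ref{item:lemma:cyclesknowntonodes})); for these I would repeat the intersection/composition analysis but work with the loop versions of arenas, invoking Lemma~\ref{lem:articulation-point-of-set} to locate the repeated node and the loop-consistency guarantees supplied by Lemma~\ref{lem:large-prog-max} to ensure the relevant abstract loop edge carries a large program of which the concrete program is an instance. Finally, a bare test $\alpha = \varphi?$ forces $u = v$, and here $\Mubox{\varphi?}\psi$ rewrites to $\varphi \imp \psi$ by axiom~\eqref{ax:?}, so the conclusion reduces to knowing that $\varphi$ holds at $u$ --- which is supplied by the accompanying truth lemma on the strictly smaller formula $\varphi$ (equivalently, by the fact that the construction places $\varphi$ in $\dom(u)$ whenever $u$ carries the corresponding self-loop).

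I expect the main obstacle to be the no-articulation-point subcase of intersection: one must argue that a concrete intersection-transition with no intermediate cut really does correspond to genuinely parallel abstract edges between the same two arena nodes, so that Lemma~\ref{lem:union-of-abstract-edges} is applicable and the combined large program $\gamma$ has $\alpha_1 \cap \alpha_2$ as an instance. Keeping the interaction between this combination step, the Gateway decomposition and the loop case coherent --- so that every concrete transition is ultimately matched against an abstract edge to which Lemma~\ref{lemma:boxesincanmodel}(\ref{item:lemma:boxesincanmodel}) can be applied --- is the delicate part.
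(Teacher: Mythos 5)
Your proposal is correct and follows essentially the same route as the paper's proof: reduce to the single-arena, $u \neq v$ case via the forest structure of Lemma~\ref{lemma:boxesincanmodel} and the Gateway Lemma~\ref{lem:gateway}, handle intersection by either merging parallel abstract edges with Lemma~\ref{lem:union-of-abstract-edges} or cutting at an articulation point found via Lemma~\ref{lem:abstract-edges-common-art-point}, and defer the $u = v$ and cyclic cases to the loop-edge machinery of Lemma~\ref{lem:boxes-circle}. The only (cosmetic) difference is that you push boxes through intermediate worlds directly with the composition axiom and the induction hypothesis, whereas the paper first assembles a chain of abstract edges whose instances compose to imply $\alpha$ and then invokes Lemma~\ref{lemma:boxesincanmodel}(\ref{item:lemma:boxesincanmodel}) once.
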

\begin{proof}
If $u = v$, because $\vdash \Mubox{\alpha}\psi \rightarrow \Mubox{\alpha^\circlearrowleft}\psi$, we can invoke Lemma~\ref{lem:boxes-circle}. So without loss of generality, $u \not = v$.

Since $u \not = v$, we know that $\alpha \in \Forw$.  Moreover, we can assume that both $u$ and $v$ and the witness graph for \Transition{u}{\alpha}{v} are both contained in the same arena. Otherwise, $u$ and $v$ are in different arenas and, by Item~\eqref{item:lemma:arenenastreelike} of Lemma~\ref{lemma:boxesincanmodel}, there is $w$ such that all paths from $u$ to $v$ must go through $w$. By the Gateway Lemma~\ref{lem:gateway}, $\alpha$ can be rewritten as $\alpha_1;\alpha_2$ such that $\Transition{u}{\alpha_1}{w}$ and $\Transition{w}{\alpha_2}{v}$. The claim of the lemma reduces to prove that $\psi' = \Mubox{\alpha_2}\psi \in \dom(w)$. In a similar fashion, if $u$ and $v$ are in the same arena, but parts of the witness graph of $\Transition{u}{\alpha}{v}$ are not, the conditions of Lemma~\ref{lem:articulation-point-of-set} are met and parts in different arenas can be reduced to a test.

We prove that there is a sequence $\beta_1;\dotsc;\beta_n$ of abstract edges such that there are instances $\beta'_i$ of the $\beta_i$ with $\vdash \beta'_1;\dotsc;\beta'_n \Rightarrow \alpha$. By Item~\eqref{item:lemma:boxesincanmodel} of Lemma~\ref{lemma:boxesincanmodel}, this proves the lemma.
We prove this by induction over the structure of $\alpha$. If $\alpha = a$ for some atomic program $a$, then by construction, there is an abstract $a$-edge from $u$ to $v$. If $\alpha = \alpha_1;\alpha_2$, there is $w$ such that $\Transition{u}{\alpha_1}{w}$ and $\Transition{w}{\alpha_2}{v}$. By the induction hypothesis there are sequences $\beta^1_1;\dotsc;\beta^1_n$ and $\beta^2_1;\dotsc;\beta^2_m$ and points $u =w^1_0,\dotsc,w^1_n = w$ and $w = w^2_0;\dotsc;w^2_m = v$ such that there are abstract $\beta^j_i$-edges from $w^j_{i-1}$ to $w^j_{i}$ and there are instances $\beta^j_i$ of the $\beta^j_i$ such that $\vdash \beta^1_1;\dotsc;\beta^1_n \Rightarrow \alpha_1$ and $\vdash \beta^2_1;\dotsc;\beta^2_m \Rightarrow \alpha_2$. Then $\vdash \beta^1_1;\dotsc;\beta^1_n;\beta^2_1;\dotsc;\beta^2_m \Rightarrow \alpha_1;\alpha_2$.

If $\alpha = \alpha_1\cap\alpha_2$, then $\Transition{u}{\alpha_1}{v}$ and $\Transition{u}{\alpha_2}{v}$. By the induction hypothesis, there are sequences $\beta^1_1;\dotsc;\beta^1_n$ and $\beta^2_1;\dotsc;\beta^2_m$ and points $u =w^1_0,\dotsc,w^1_n = v$ and $u = w^2_0;\dotsc;w^2_m = v$ such that there are abstract $\beta^j_i$-edges from $w^j_{i-1}$ to $w^j_{i}$ and there are instances $\beta^j_i$ of the $\beta^j_i$ such that $\vdash \beta^1_1;\dotsc;\beta^1_n \Rightarrow \alpha_1$ and $\vdash \beta^2_1;\dotsc;\beta^2_m \Rightarrow \alpha_2$. There are two cases: If both sequences have length one, we can apply Lemma~\ref{lem:union-of-abstract-edges} to obtain an abstract edge $\beta$ from $u$ to $v$ such that there are instances $\beta_1$ and $\beta_2$ of $\beta$ with $\vdash \beta_1 \Rightarrow \alpha_1$ and $\vdash \beta_2 \Rightarrow \alpha_2$. Then the sequence just consisting of $\beta$ is as desired.

If at least one sequence has length longer than one, we begin replacing subsequences of the form $\beta^j_i;\beta^j_{i+1}$ by abstract $\beta^j_i;\beta^j_{i+1}$-edges, if possible. Either both sequences reach length one, and we can apply the previous case, or the conditions of Lemma~\ref{lem:abstract-edges-common-art-point} apply and there is $w$ such that all paths from $u$ to $v$ go through $w$. By the Gateway Lemma~\ref{lem:gateway}, we can rewrite $\alpha_1$ and $\alpha_2$ to $\alpha^1_1;\alpha^2_1$ and $\alpha^1_2;\alpha^2_2$ such that $\vdash \alpha^1_i;\alpha^2_i \Rightarrow \alpha_i$ and $\Transition{u}{\alpha^1_i}{w}$ and $\Transition{w}{\alpha^2_i}{v}$ for $i =1,2$. Then $\vdash (\alpha^1_1\cap\alpha^1_2);(\alpha^2_1\cap\alpha^2_2) \Rightarrow (\alpha_1^1;\alpha^2_1)\cap(\alpha^1_2;\alpha^2_2)$ and, by the induction hypothesis, there are $\gamma^1_1;\dotsc;\gamma^1_n$ and $\gamma^2_1;\dotsc;\gamma^2_m$ such that $\gamma'^1_1;\dotsc;\gamma'^1_n;\gamma'^2_1;\dotsc;\gamma'^2 \Rightarrow (\alpha^1_1\cap\alpha^1_2);(\alpha^2_1\cap\alpha^2_2)$ for some instances $\gamma'^j_i$ of $\gamma^j_i$. This finishes the proof.
\end{proof}

\begin{lemma}
\label{lem:boxes-circle}
For all points $u \in A$ and all programs $\alpha$ such that $\Transition{u}{\alpha^\circlearrowleft}{u}$ and all formulae $\psi$ it holds that: If $\Mubox{\alpha^\circlearrowleft}\psi \in dom(u)$, then $\psi \in dom(u)$.
\end{lemma}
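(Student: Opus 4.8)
The plan is to reduce the box statement to its diamond counterpart via axiom~\eqref{ax:c3} and then to establish the latter from the shape of the witnessing cycle. Concretely, since $\dom(u)$ is an \mcs and hence closed under $\vdash$, and since the instance of~\eqref{ax:c3} obtained by substituting $p = \true$ and $q = \psi$ reads $\Mudiam{\alpha^\circlearrowleft}\true \wedge \Mubox{\alpha^\circlearrowleft}\psi \imp \psi$, it suffices to show that $\Mudiam{\alpha^\circlearrowleft}\true \in \dom(u)$; the hypothesis $\Mubox{\alpha^\circlearrowleft}\psi \in \dom(u)$ then yields $\psi \in \dom(u)$ at once. So the entire content of the lemma lies in turning the semantic fact $\Transition{u}{\alpha^\circlearrowleft}{u}$ into the syntactic membership $\Mudiam{\alpha^\circlearrowleft}\true \in \dom(u)$.

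To obtain this membership I would argue by induction on the size of a minimal witness graph $\mathfrak{G}$ for $\Transition{u}{\alpha^\circlearrowleft}{u}$. By Lemma~\ref{lemma:boxesincanmodel}\eqref{item:lemma:cyclesknowntonodes}, $\mathfrak{G}$ consists purely of abstract loop edges. If all of $\mathfrak{G}$ lies within a single arena, the second part of Lemma~\ref{lemma:boxesincanmodel}\eqref{item:lemma:cyclesknowntonodes} applies directly: as $\alpha$ is an instance of the large loop program labelling that arena, we get $\Mudiam{\alpha^\circlearrowleft}\true \in \dom(u)$. This is the base case.

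Otherwise $\mathfrak{G}$ visits more than one arena. By the forest-like structure of Lemma~\ref{lemma:boxesincanmodel}\eqref{item:lemma:arenenastreelike}, any excursion into a neighbouring arena enters and leaves through a single articulation node $w$ and is therefore itself a sub-cycle $\Transition{w}{\beta^\circlearrowleft}{w}$ with a strictly smaller witness graph. I would apply Lemma~\ref{lem:articulation-point-of-set} to replace this excursion by a single test occurrence $\Mudiam{\beta^\circlearrowleft}\true?$ at $w$, producing a program $\alpha'$ with $\vdash \alpha' \Rightarrow \alpha$ whose witness graph avoids the removed nodes. The test is legitimate because the sub-cycle exists only if the arena for $\beta^\circlearrowleft$ was built at $w$, so $\Mudiam{\beta^\circlearrowleft}\true \in \dom(w)$ (equivalently, by the induction hypothesis applied to the smaller sub-cycle). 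Iterating collapses every excursion to such a test and leaves a witness graph inside one arena, to which the base case applies, giving $\Mudiam{\alpha'^\circlearrowleft}\true \in \dom(u)$. Since $\vdash \alpha' \Rightarrow \alpha$ entails $\alpha'^\circlearrowleft \Rightarrow \alpha^\circlearrowleft$, one application of rule~(\texttt{PSub}) lifts this to $\Mudiam{\alpha^\circlearrowleft}\true \in \dom(u)$, and axiom~\eqref{ax:c3} then finishes the argument.

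I expect the main obstacle to be the bookkeeping for nested cycles: an excursion sub-cycle at $w$ need not start at $u$, so the rotation of the loop and of the tests carried along it must be accounted for. This is precisely the information that the construction stores as the extra rotated-loop formulas at intermediate test nodes and, at the level of provability, is guaranteed by rule~\eqref{ax:apo} through Lemma~\ref{lem:loop-set-cons}. Verifying that these stored formulas really match the instances produced by Lemma~\ref{lem:articulation-point-of-set} and the Gateway Lemma~\ref{lem:gateway}, so that the relation $\alpha' \Rightarrow \alpha$ and the membership $\Mudiam{\beta^\circlearrowleft}\true \in \dom(w)$ line up, is the delicate part; axioms~\eqref{ax:c1} and~\eqref{ax:c2} supply the algebra for recombining individual loop and box facts along compositions should a more computational recombination be preferred to the witness-graph reduction.
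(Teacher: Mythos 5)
Your proposal follows essentially the same route as the paper's (very terse) proof: reduce to a witness graph inside a single arena via Lemma~\ref{lem:articulation-point-of-set} and the forest structure, convert the cycle into abstract loop edges so that Item~\eqref{item:lemma:cyclesknowntonodes} of Lemma~\ref{lemma:boxesincanmodel} yields $\Mudiam{\alpha^\circlearrowleft}\true \in \dom(u)$, and finish with axiom~\eqref{ax:c3}. You are in fact more explicit than the paper about the final step via~\eqref{ax:c3} and the induction over excursions, both of which the paper leaves implicit in its appeal to the proof of Lemma~\ref{lem:box}.
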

\begin{proof}
The proof proceeds similarly to Lemma~\ref{lem:box}. Again, without loss of generality, the witness graph for $\Transition{u}{\alpha^\circlearrowleft}{u}$ is contained in one arena. Instead of converting the program into abstract forward edges, we convert it into abstract loop edges. Once this is done, we invoke Item~\eqref{item:lemma:cyclesknowntonodes} of Lemma~\ref{lemma:boxesincanmodel}.
\end{proof}

\begin{lemma}[Existence Lemma]
\label{lem:existence}
	For any $u \in A$ and any program $\alpha$, if $\langle \alpha \rangle \varphi \in dom(u)$, then there is a state $v \in A$ with $\Transition{u}{\alpha}{v}$, such that $\varphi \in dom(v)$.
\end{lemma}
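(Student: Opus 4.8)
The plan is to prove the statement by induction on the structure of the program $\alpha$, working throughout with the normal form of Lemma~\ref{lem:program-normal-form}, so that $\alpha$ is either cyclic or a \Forw-program built from $a$, $\alpha_1\cap\alpha_2$ and $\alpha_1;\alpha_{\cyc};\alpha_2$. This induction is run as the diamond half of the truth lemma, so I may invoke, for every test formula sitting strictly below $\alpha$ in the Fischer--Ladner ordering, the inductive hypothesis that it is satisfied at a node exactly when it lies in that node's label; iteration-freeness makes this ordering well founded. I first dispose of the cyclic cases. If $\alpha$ is a pure test then by \eqref{ax:?} and \eqref{ax:t1} the formula $\Mudiam{\alpha}\varphi$ is provably equivalent to a conjunction $\varphi\wedge\psi$, which therefore lies in the \mcs $\dom(u)$; hence $\varphi\in\dom(u)$ and the self-loop $\Transition{u}{\alpha}{u}$ holds because $\psi\in\dom(u)$ is satisfied at $u$ by the inductive hypothesis, so $v=u$ works. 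The remaining cyclic shape $\alpha_{\forw}\cap\psi?$ is handled by \eqref{ax:t1}, which reduces $\Mudiam{\alpha_{\forw}\cap\psi?}\varphi$ to the loop formula $\Mudiam{\alpha_{\forw}^\circlearrowleft}(\varphi\wedge\psi)$, treated below.

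For a genuine forward step the crux is to manufacture the label of the successor. Given $\Mudiam{\alpha}\varphi\in\dom(u)$, I form $\Phi_0 = \{\psi \mid \Mubox{\alpha}\psi\in\dom(u)\}\cup\{\varphi\}$. Standard modal reasoning with \eqref{ax:k} and rule~(\texttt{Gen}) shows $\Phi_0$ is consistent: if $\vdash(\bigwedge_i\psi_i)\imp\neg\varphi$ with all $\Mubox{\alpha}\psi_i\in\dom(u)$, then $\dom(u)\vdash\Mubox{\alpha}\neg\varphi$, contradicting $\Mudiam{\alpha}\varphi\in\dom(u)$. Extend $\Phi_0$ to an \mcs $\Phi$; by maximality of $\Phi$ and the same reasoning, $\Mudiam{\alpha}\psi\in\dom(u)$ holds for every $\psi\in\Phi$, since otherwise $\Mubox{\alpha}\neg\psi\in\dom(u)$ would force $\neg\psi\in\Phi_0\subseteq\Phi$. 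Thus $\Mudiam{\alpha}\psi\in\dom(u)$ for all $\psi$ in the \mcs $\Phi$ with $\varphi\in\Phi$.

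This is precisely the hypothesis of the arena clause for forward edges. Making $\alpha$ large by replacing each test $\psi?$ with $\{\psi\}?$, Lemma~\ref{lem:large-prog-max} yields a large program $\alpha'\geq\alpha$ with $\Transition{\dom(u)}{\alpha'}{\Phi}$ maximally consistent, so the arena construction puts, one generation above $u$, a node $v=(\Phi,\alpha,u)$ with $\dom(v)=\Phi\ni\varphi$ and an abstract $\alpha'$-forward edge from $u$ to $v$. It then remains to read off that the ordinary transition $\Transition{u}{\alpha}{v}$ holds: since $\alpha$ is an instance of $\alpha'$, one follows that instance through the subgraph that refines the abstract edge, where the atomic and $\cap$-clauses are immediate from the construction and each intermediate test node $w$ carries $\dom(w)=X$ with the relevant test formula $\chi\in X$. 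For the loop formula $\Mudiam{\alpha_{\forw}^\circlearrowleft}\chi$ one argues symmetrically with the loop branch of $\operatorname{Gen}_{i+1}$: $\Mudiam{\alpha_{\forw}^\circlearrowleft}\true\in\dom(u)$ triggers a maximally consistent abstract $\alpha'$-loop at $u$, giving $\Transition{u}{\alpha_{\forw}^\circlearrowleft}{u}$, and axiom \eqref{ax:c3} (with $q=\true$) delivers $\chi\in\dom(u)$, so again $v=u$.

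I expect the main obstacle to be exactly the last read-off step, namely certifying that the abstract edge really is realised by the concrete program $\alpha$. Each test self-loop $\Transition{w}{\chi?}{w}$ inside the arena requires $\mathfrak{A},w\models\chi$, which is an instance of the very truth lemma being established. The intended resolution is the mutual induction noted at the outset: the test formulae $\chi$ embedded in $\alpha$ are strictly simpler than $\Mudiam{\alpha}\varphi$, so their satisfaction at the corresponding test nodes (whose labels contain them by maximal consistency of the edge, Lemma~\ref{lem:large-prog-max}) is available from the induction hypothesis. The bookkeeping that turns ``$\alpha$ is an instance of $\alpha'$'' into an actual walk through the refining subgraph, matching each $\cap$- and $;$-node, is the remaining routine but delicate part.
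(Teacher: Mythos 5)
Your proposal is correct and follows essentially the same route as the paper: isolate the cyclic case via the loop branch of the construction (the paper just cites the construction where you spell out \eqref{ax:t1} and \eqref{ax:c3}), and for forward programs build an \mcs successor label $\Phi$ containing $\varphi$ with $\Mudiam{\alpha}\psi \in \dom(u)$ for all $\psi \in \Phi$, then read the transition off the node $(\Phi,\alpha,u)$ and its maximally consistent abstract edge supplied by Lemma~\ref{lem:large-prog-max}. The only differences are cosmetic: you obtain $\Phi$ by Lindenbaum extension of $\{\psi \mid \Mubox{\alpha}\psi \in \dom(u)\}\cup\{\varphi\}$ where the paper applies Zorn's Lemma to consistent subsets of $\{\psi \mid \Mudiam{\alpha}\psi\in\dom(u)\}$, and you make explicit the mutual induction with the truth lemma needed to certify the tests inside the refined abstract edge, a step the paper leaves implicit in its final sentence.
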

\begin{proof}
If $\alpha\in \Cyc$, there is an $\alpha$-loop at $u$ by the construction of $\mathfrak{A}$.
	
The other case is that $\alpha\in \Forw$. Set $X^* = \{ \varphi \mid \Mudiam{\alpha}\varphi \in dom(u) \}$. This set, in general, is inconsistent. However, the set of nonempty subsets of $X$ that contain $\varphi$ and all $\psi$ such that $\Mubox{\alpha}\psi \in \dom(u)$ is nonempty and satisfies the conditions of Zorn's Lemma. Hence, there is a maximal such set $\Phi$. We claim that it is an \mcs: For each $\pdocap[\tau]$-formula $\psi$, either $\Mudiam{\alpha}\psi \in \dom(u)$ or $\Mubox{\alpha}\neg \psi \in \dom(u)$. In the latter case, $\psi \in \Phi$. In the former case, both $\Mudiam{\alpha}\psi \in \dom(u)$ and $\Mudiam{\alpha}\psi \in \dom(u)$. If neither of these is in $\Phi$, then $\Phi$ is not maximal since $\vdash \neg\Mudiam{\alpha}\psi \vee \neg \Mudiam{\alpha}\psi \leftrightarrow \Mubox{\alpha}\false$. So $\Phi$ was maximal after all.
	
Further, let $v = (\Psi, \alpha, u)$. By construction, $\varphi \in dom(v)$ and there is a large program $\alpha'$ such that $\Transition{u}{\alpha'}{v}$ is maximally consistent. Thus $\Transition{u}{\alpha}{v}$  since there is an $\alpha'$-edge from $u$ to $v$.
\end{proof}

\begin{lemma}
For all $v \in A$ and for all $\varphi \in \pdocap$ we have $\mathfrak{A},v\models \varphi$ if and only if $\varphi \in dom(v)$.
\end{lemma}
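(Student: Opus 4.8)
The plan is to prove the statement by induction on a complexity measure of $\varphi$ (for instance the number of symbols, with the convention that a test $\chi?$ contributes its content $\chi$ as a strictly smaller component), using throughout that $\dom(v)$ is always an \mcs and hence enjoys the closure properties (1)--(4) listed after Lemma~\ref{lem:vdashsound}. The atomic case $\varphi = p$ is immediate from the definition $P^{\mathfrak{A}} = \{l \in A \mid p \in \dom(l)\}$ of the valuation. The Boolean cases $\varphi = \neg\psi$ and $\varphi = \psi_1 \vee \psi_2$ reduce directly to the induction hypothesis: by the \mcs-properties we have $\psi_1 \vee \psi_2 \in \dom(v)$ iff $\psi_1 \in \dom(v)$ or $\psi_2 \in \dom(v)$, and $\neg\psi \in \dom(v)$ iff $\psi \notin \dom(v)$, which match the semantic clauses once the hypothesis is applied to the strictly smaller $\psi_i$.

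The interesting case is $\varphi = \Mudiam{\alpha}\psi$, where, by the standing normal-form assumption (Lemma~\ref{lem:program-normal-form}), I may take $\alpha \in \Forw \cup \Cyc$. For the direction $\Leftarrow$, suppose $\Mudiam{\alpha}\psi \in \dom(v)$. The Existence Lemma~\ref{lem:existence} yields a world $w \in A$ with $\Transition{v}{\alpha}{w}$ in $\mathfrak{A}$ and $\psi \in \dom(w)$; since $\psi$ is strictly smaller than $\Mudiam{\alpha}\psi$, the induction hypothesis gives $\mathfrak{A},w \models \psi$, and the semantics of the diamond then deliver $\mathfrak{A},v \models \Mudiam{\alpha}\psi$. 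For the direction $\Rightarrow$, suppose $\mathfrak{A},v \models \Mudiam{\alpha}\psi$. Then there is $w$ with $\Transition{v}{\alpha}{w}$ and $\mathfrak{A},w \models \psi$, so the induction hypothesis gives $\psi \in \dom(w)$. Assume toward a contradiction that $\Mudiam{\alpha}\psi \notin \dom(v)$. As $\dom(v)$ is an \mcs we get $\neg\Mudiam{\alpha}\psi \in \dom(v)$, hence by the duality axiom~\eqref{ax:duality} and closure under $\vdash$ also $\Mubox{\alpha}\neg\psi \in \dom(v)$. The Box Lemma~\ref{lem:box}, applied to $\Transition{v}{\alpha}{w}$, forces $\neg\psi \in \dom(w)$, contradicting $\psi \in \dom(w)$ and consistency of $\dom(w)$. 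Thus $\Mudiam{\alpha}\psi \in \dom(v)$.

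The cyclic subcase, where $\alpha \in \Cyc$ is realised by a self-loop at $v$ (so that $u = v$), is handled analogously but with Lemma~\ref{lem:boxes-circle} substituted for Lemma~\ref{lem:box} in the $\Rightarrow$ direction, and with the existence of the loop guaranteed by the loop-arena part of the construction together with Item~\ref{item:lemma:cyclesknowntonodes} of Lemma~\ref{lemma:boxesincanmodel}. Here I also use the two reductions provided by Lemma~\ref{lem:program-normal-form}, which rewrite $\Mudiam{\alpha_{\cyc}}\psi$ either into a modality-free formula or into $\Mudiam{\alpha_{\forw}^\circlearrowleft}\psi'$, so that the cyclic case is either discharged Boolean-style or reduced to a single large loop.

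I expect the main obstacle to be well-foundedness and the bridge between the abstract edge structure and the genuine relation $R_\alpha^{\mathfrak{A}}$, rather than the outer diamond/box bookkeeping above. A test $\chi?$ inside $\alpha$ must be true at its intermediate node for $\Transition{v}{\alpha}{w}$ to hold semantically; the arena construction labels each such test node by an \mcs that, by maximal consistency of the transition and Lemma~\ref{lem:large-prog-max}, contains exactly the required test formula $\chi$, and since $\chi$ is strictly smaller than $\Mudiam{\alpha}\psi$ the induction hypothesis supplies its truth. The delicate point is the cyclic case, where the test sets along a loop carry loop-consistency formulae of the form $\Mubox{(\dots)^\circlearrowleft}\false$ and $\Mudiam{(\dots)^\circlearrowleft}\true$: it is precisely Item~\ref{item:lemma:cyclesknowntonodes} of Lemma~\ref{lemma:boxesincanmodel} together with Lemma~\ref{lem:boxes-circle} that certify that an abstract loop corresponds to a genuine cycle carrying the correct diamond-witnesses. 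Once these supporting lemmas are invoked as black boxes, the Truth Lemma closes by the induction described above.
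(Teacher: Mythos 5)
Your proposal is correct and follows essentially the same route as the paper: structural induction on $\varphi$, with the atomic and Boolean cases discharged by the valuation definition and the \mcs closure properties, the diamond direction $\Leftarrow$ by the Existence Lemma~\ref{lem:existence}, and the converse direction routed (via duality/contraposition) through Lemmas~\ref{lem:box} and~\ref{lem:boxes-circle}. The only difference is presentational — you argue the $\Rightarrow$ direction of the diamond case directly by contradiction rather than by the paper's blanket contraposition remark, and you make the induction measure for formulae nested inside tests explicit, which the paper leaves implicit.
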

\begin{proof}
We only prove the if part. The other direction follows from contraposition and the fact that an \mcs contains every formula or its negation.
The case for atomic propositions  is by the definition of the valuations in the canonical model, the case for boolean connectives follows from the closure properties of \mcs.  The case for $\varphi = \Mudiam{\alpha} \psi$ follows from Lemma~\ref{lem:existence}. This leaves the case $\varphi = \Mubox{\alpha} \psi$, which follows from Lemma~\ref{lem:box} and Lemma~\ref{lem:boxes-circle}.
\end{proof}

\begin{theorem}
$\vdash$ is sound and complete: $\Phi \vdash \varphi$ if and only if $\Phi \models \varphi$.
\end{theorem}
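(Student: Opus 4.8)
The plan is to prove the two directions separately. The implication $\Phi \vdash \varphi \Rightarrow \Phi \models \varphi$ is exactly the soundness statement recorded in Lemma~\ref{lem:vdashsound}; it is obtained by the routine induction on the length of a derivation, checking that every scheme of Fig.~\ref{fig:axioms} is valid and that each rule preserves validity. The only non-propositional points are the program axioms, whose validity is read directly off the relational semantics of $;$, $\cup$, $\cap$ and $?$, and rule (\texttt{PSub}), which is sound because $\alpha \Rightarrow \alpha'$ forces $R_\alpha \subseteq R_{\alpha'}$ and the diamond is monotone in the positions (under an even number of negations) where the replacement is performed.

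For the converse, $\Phi \models \varphi \Rightarrow \Phi \vdash \varphi$, I would argue contrapositively and reduce to the canonical model $\mathfrak{A}$. Assuming $\Phi \not\vdash \varphi$, I first show that $\Phi \cup \{\neg\varphi\}$ is consistent: if it were not, then $\Phi \cup \{\neg\varphi\} \vdash \false$, so by the finitary definition of $\vdash$ there are $\varphi_1,\dots,\varphi_n \in \Phi$ with $\vdash (\bigwedge_{i=1}^n \varphi_i) \wedge \neg\varphi \imp \false$, and propositional reasoning gives $\vdash (\bigwedge_{i=1}^n \varphi_i) \imp \varphi$, i.e.\ $\Phi \vdash \varphi$ --- a contradiction. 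By Zorn's Lemma this consistent set extends to a maximally consistent set $\Phi^*$.

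Now I invoke the canonical model. Because $\operatorname{Gen}_0 = \operatorname{mcs}(\pdocap) \subseteq A$ and $\dom(l) = l$ for every $l \in \operatorname{Gen}_0$, the set $\Phi^*$ occurs as a world $l$ of $\mathfrak{A}$ with $\dom(l) = \Phi^*$. The Truth Lemma proved immediately above --- $\mathfrak{A},v \models \psi$ iff $\psi \in \dom(v)$ --- then yields $\mathfrak{A},l \models \psi$ for all $\psi \in \Phi^*$, so in particular $\mathfrak{A},l \models \Phi$ and $\mathfrak{A},l \models \neg\varphi$. Since the semantics of object-language formulae refers only to the $\tau$-relations, this remains true on the $\tau$-reduct, which is the canonical model; hence $\Phi \not\models \varphi$, completing the contrapositive.

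For this final assembly the genuine content has already been discharged inside the Truth Lemma, which rests on the Existence Lemma~\ref{lem:existence} for diamonds, Lemmas~\ref{lem:box} and~\ref{lem:boxes-circle} for boxes, and ultimately on the arena construction together with the maximal-consistency results (Lemmas~\ref{lem:large-prog-max}, \ref{lem:intermediate-set-cons} and~\ref{lem:loop-set-cons}). The one step in the theorem itself that deserves attention --- and the reason \emph{strong} completeness is attainable --- is the extension of an arbitrary, possibly infinite, $\Phi \cup \{\neg\varphi\}$ to a single \mcs: this uses that $\vdash$ is finitary and that the iteration-free logic is compact, a step that would fail in the presence of the Kleene star without an infinitary rule. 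A secondary subtlety I would flag explicitly is that $\mathfrak{A}$ is built over the enriched vocabulary $\tau'$ of abstract edges, so one must observe that these are refined into $\tau$-arenas before any $\tau$-formula is evaluated, which is precisely what makes passing to the $\tau$-reduct harmless.
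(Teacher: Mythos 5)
Your proposal is correct and follows the paper's own argument: soundness via Lemma~\ref{lem:vdashsound}, and completeness by extending the consistent set $\Phi \cup \{\neg\varphi\}$ to an \mcs that appears as a generation-$0$ world of the canonical model and applying the Truth Lemma. The extra detail you supply (the finitary consistency argument and the remark on the $\tau$-reduct) is a harmless elaboration of the same route, not a different one.
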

\begin{proof}
Soundness is by Lemma~\ref{lem:vdashsound}. For the sake of contradiction, assume that $\Phi \models \varphi$ but $\Phi \not\vdash \varphi$. Then $\Phi \cup \{\neg \varphi\}$ is consistent and contained in an \mcs $\Phi'$. But the generation $0$ node $\mathfrak{A},\Phi'$ is such that $\mathfrak{A},\Phi' \models \psi$ if and only if $\psi \in \Phi'$. Since $\neg\varphi \in \Phi'$ and $\Phi \subseteq \Phi'$, this contradicts $\Phi \models \varphi$. Hence, $\vdash$ is complete.
\end{proof}


\section{Conclusion}

We have presented a refined construction of a canonical model for the iteration-free fragment
of Propositional Dynamic Logic with Intersection and Tests (\pdocap) and used this to prove completeness of an axiom system
for this logic. The trick that handles the combinatorial difficulties introduced by the interaction
between the intersection operator and test programs is the use of several copies of a maximally consistent set
for worlds in this Kripke model. 

As in turns out, there are parallels between our construction and that in \cite{Ba2003.6}, respectively those used for fragments of \pdl, e.g.~in
\cite{journals/jancl/BalbianiC98}. Both constructions use multiple copies of maximally consistent sets in their canonical model, and both construct this
model as the countable union of partial approximations, each of which is generated by constructing witnesses for all diamond formulae that lack such a
witness. The construction in this paper is more explicit and more constructive, for example because it does not rely on a well-ordering of unsatisfied
diamond formulae, or the language being countable.

The rules of the proof calculi in this paper and in \cite{Ba2003.6} also have similarities. However, our approach does not rely on computable or
recursively enumerable auxiliary functions but rather incorporates their content into the calculus itself. 

Future work will attempt to derive a weakly complete axiomatisation for full \pdtcap, i.e.\ the logic including Kleene
iteration, based on such a refined canonical model construction over finite sets of formulae.    

\paragraph*{Acknowledgment.} We thank Philippe Balbiani for the discussion we had on the topic of axiomatisations of Iteration-Free
PDL with Intersection.


\bibliographystyle{eptcs}
\bibliography{../literature}


\end{document}